\documentclass[11pt]{amsart}
\usepackage{amsmath,amsthm,amssymb,verbatim}

\linespread{1.25} 
\onecolumn

\usepackage{fullpage}
\usepackage{amsmath, amssymb, enumerate, color, graphicx, bm, url, mathbbol}
\usepackage{appendix}

\textheight=22cm

\newtheorem{theorem}{Theorem}[section]
\newtheorem{proposition}[theorem]{Proposition}
\newtheorem{corollary}[theorem]{Corollary}
\newtheorem{lemma}[theorem]{Lemma}



\numberwithin{equation}{section}

\DeclareMathOperator*{\E}{\mathbb{E}}

\DeclareMathOperator*{\sign}{sign}

\def \R {\mathbb{R}}

\def \e {\varepsilon}
\def \eps {\varepsilon}

\def \k {\kappa}

\def \moo| {\langle}
\def \< {\langle }
\def \> {\rangle }
\def \^ {\widehat}


\newcommand{\norm}[1]{\left \|#1\right \|}
\newcommand{\zeronorm}[1]{\norm{#1}_0}
\newcommand{\onenorm}[1]{\norm{#1}_1}
\newcommand{\twonorm}[1]{\norm{#1}_2}
\newcommand{\inftynorm}[1]{\norm{#1}_\infty}
\newcommand{\opnorm}[1]{\norm{#1}}

\newcommand{\abs}[1]{\left | #1 \right |}


\def \tanh{\text{tanh}}

\newcommand{\vect}[1]{\bm{#1}}

\def \va {\vect{a}}

\def \vg {\vect{g}}

\def \vv {\vect{v}}

\def \vx {\vect{x}}
\def \vy {\vect{y}}
\def \vz {\vect{z}}

\newcommand{\ip}[1]{\langle#1\rangle}
\newcommand{\parenth}[1]{\left(#1\right)}

\begin{document}
\title{One-bit compressed sensing with non-Gaussian measurements}
\author{Albert Ai}
\author{Alex Lapanowski}
\author{Yaniv Plan}
\author{Roman Vershynin}

\address{Department of Mathematics,
  University of Michigan,
  530 Church St.,
  Ann Arbor, MI 48109, U.S.A.}

\email{\{aflapan, yplan, romanv\}@umich.edu}
\email{aai@princeton.edu}

\thanks{Y.P.~was supported by an NSF Postdoctoral Research Fellowship under award No. 1103909. 
R.V.~was supported by NSF grant 1001829. A.A. and A.L.~were REU students
 supported by the NSF grant 0918623.}
 
 \date{Submitted August 2012}
\subjclass[2000]{94A12; 60D05; 90C25}

\begin{abstract}
In one-bit compressed sensing, previous results state that sparse signals may be 
robustly recovered when the measurements are taken using Gaussian random vectors.
In contrast to standard compressed sensing, these results are not extendable to 
natural non-Gaussian distributions without further assumptions, as can be demonstrated by simple counter-examples involving extremely sparse signals.  
We show that approximately sparse signals that are not extremely sparse
can be accurately reconstructed from single-bit measurements sampled according to a sub-gaussian distribution, and the reconstruction comes as the solution to a convex program.


\end{abstract}

\maketitle

\textit{Keywords:} 1-bit compressed sensing; quantization; signal reconstruction; convex programming
\section{Introduction}

In the standard noiseless compressed sensing model,
one has access to linear 
measurements of the form
$$y_{i}=\ip{\va_{i}, \vx},\qquad i=1, 2, \dots, m$$
where $\va_{1},\dots, \va_{m}\in\R^{n}$ are known measurement vectors and $\vx \in \R^n$ is a sparse signal which one wishes to reconstruct (see, e.g., \cite{CSBook}). 
Let $\zeronorm{\vx}$ denote the number of nonzero entries in $\vx$.  Typical results state that when the measurement vectors are chosen randomly from a sub-gaussian distribution, and $\zeronorm{\vx} \leq s$, then $m = O(s \log(n/s))$ measurements are sufficient for robust recovery of the 
signal $\vx$ (see, \cite{CSBook}).	

In noiseless one-bit compressed sensing, the measurements are compressed to single bits, and thus they take the form 
\begin{equation}
\label{eq:one-bit}
y_{i}=\sign\parenth{\ip{\va_{i}, \vx}},\qquad i=1, 2, \dots, m. 
\end{equation}
Here, the sign function is defined by $\text{sign}(t)=1$ when $t\geq 0$ and $-1$ otherwise.  Clearly, the magnitude of $\vx$ is lost in these measurements and so the goal is to approximate the direction of $\vx$. 
Thus we may assume without loss of generality that $\vx \in S^{n-1}$.  

One-bit compressed sensing was introduced in \cite{Boufounos2008} to model extreme quantization in compressed sensing.
The webpage \verb=http://dsp.rice.edu/1bitCS/= details the recent literature that concerns
theoretical and algorithmic results on one-bit compressed sensing, as well as applications and 
extensions to quantization with more than two bits. 
Let us review the existing theoretical results on one-bit quantization.  

Suppose that the signal $\vx \in \R^n$ satisfies $\zeronorm{\vx} \leq s$. 
Gupta et al. \cite{Gupta2010} assume that the measurement vectors $\va_i$ are Gaussian and demonstrate that 
the support of $\vx$ can tractably be recovered from either 
1) $O(s \log n)$ nonadaptive measurements assuming a constant dynamic range of $\vx$ 
(i.e. the magnitude of all nonzero entries of $\vx$ is assumed to lie between two constants), or 
2) $O(s \log n)$ adaptive measurements.  
Jacques et al. \cite{Jacques2011} introduce a certain \textit{binary $\epsilon$-stable embedding property} which is a one-bit analogue to the \textit{restricted isometry property} of standard compressed sensing.  They demonstrate that Gaussian measurement ensembles satisfy this property with high probability (given enough measurements).  Assuming the binary $\epsilon$-stable embedding property holds, they show that any estimate of $\vx$ which is both $s$-sparse and approximately matches the data, will be accurate.  In particular, $O(s \log n)$ Gaussian measurements are sufficient to have a relative error bounded by any fixed constant.  These results are robust to noise. 

Plan and Vershynin \cite{pv-1-bit, pv2012} show that one may reconstruct a sparse signal $\vx$ from single-bit measurements by {\em convex programming}, for which tractable solvers exist. 
\cite{pv-1-bit} considers the noiseless case and \cite{pv2012} considers the noisy case 
(and also sparse logistic regression). 
In \cite{pv2012} and the present paper, the model for the signal $\vx$ is allowed to be quite general, with sparsity as a special case. Indeed, suppose
$\vx$ belongs to some known set $K$, which is meant to encode the {\em model of the signal structure}.
For example, in order to encode sparsity, one could let $K$ be the set
$$
S_{n,s} := \{\vx \in \R^n: \zeronorm{\vx} \leq s, \twonorm{\vx} \leq 1\}.
$$
The recovery is achieved in \cite{pv2012} by solving the optimization problem 
\begin{equation}			\label{optimization}
\max \sum_{i=1}^{m}y_i \ip{ \va_{i}, \vx'} \qquad \text{subject to} \qquad \vx' \in K.
\end{equation}
If $K$ is a convex set then \eqref{optimization} is a convex optimization problem, 
so it can be solved by a variety of convex optimization solvers.

However, the reader may note that the set of sparse vectors $S_{n,s}$ is extremely non-convex.
To overcome this, it was proposed in \cite{pv2012} to take $K$ to be an approximate convex 
relaxation of $S_{n,s}$ (see \cite[Lemma 3.1]{pv-1-bit}), namely
\begin{equation}				\label{eq: Kns}
K = K_{n,s} := \{\vx \in \R^n: \twonorm{\vx} \leq 1, \onenorm{\vx} \leq \sqrt{s}\}.
\end{equation}
It was shown in \cite{pv2012} that $m = O(s \log(n/s))$ Gaussian measurements are sufficient
to accurately recover $\vx$ by solving the convex optimization problem \eqref{optimization}.

\medskip

A natural question is whether reconstruction of $\vx$ from one-bit measurements is still feasible when measurements are taken using random vectors with {\em non-Gaussian} coordinates. 
A simple counterexample shows that this is not generally possible even when the coordinates are sub-gaussian. Suppose that 
all coordinates of $\va_{i}$ are in $\{-1,1\}$. For example, one may let 
the coordinates be independent symmetric Bernoulli random variables.
Then the vectors
$$
\vx = (1, \frac{1}{2}, 0, \ldots,0) \quad \text{and} \quad \vx' = (1, -\frac{1}{2}, 0, \ldots,0)
$$ 
clearly satisfy
$\sign\parenth{\ip{\va_{i}, \vx}} = \sign\parenth{\ip{\va_{i}, \vx'}} $. 
This shows that one can not distinguish the two very different signals $\vx$ and $\vx'$ 
by such measurements,\footnote{One can normalize the 
signals $\vx$ and $\vx'$ to lie on $S^{n-1}$, and the same phenomenon clearly persists.} 
even if infinitely many measurements are taken.




One may ask whether this counterexample has typical or worst-case behavior. 
In this paper, we demonstrate that the latter is the case---{\em a difficulty can only arise 
for extremely sparse signals}. Namely, we show that under the assumption
\begin{equation}				\label{eq: sup norm small}
\|\vx\|_\infty \ll \|\vx\|_2 = 1,
\end{equation}
an approximate recovery of $\vx$ is still possible with general sub-gaussian measurements, 
and it is achieved by the convex program \eqref{optimization}. 
Furthermore, we prove that for the distributions that are near Gaussian (in total variation), 
an approximate recovery of $\vx$ is possible even without
the assumption \eqref{eq: sup norm small}.






\subsection{Main Results}		\label{sec:main results}

We shall assume that the signal set $K$ lies in the unit Euclidean ball in $\R^n$, 
which we shall denote $B_2^n$. 
The quality of recovery of a signal $\vx \in K$ will depend on $K$ through a 
single geometric parameter -- the {\em Gaussian mean width} of $K$. It is defined as 
$$
w(K) = \E \sup_{\vx \in K - K} \ip{\vg, \vx},
$$
where $\vg$ denotes a standard Gaussian random vector in $\R^n$, i.e. a vector
with independent $N(0,1)$ random coordinates.
The reader may refer to \cite[Section 2]{pv2012} for a brief overview of the 
properties of mean width. 

\medskip

The main purpose of this paper is to allow the measurement vectors $\va_i$ to 
have general {\em sub-gaussian} (rather than Gaussian) independent coordinates. 
Recall that a random variable $a$ is sub-gaussian if its distribution is dominated 
by a centered normal distribution. This property can be expressed in several equivalent ways, 
see \cite[Section 5.2.3]{Vershynin2012}. One convenient way to define a sub-gaussian 
random variable is to require that its moments
be bounded by the corresponding moments of $N(0,1)$, so that 
$(\E |a|^p)^{1/p} = O(\sqrt{p})$ as $p \to \infty$. 
Formally, $a$ is called sub-gaussian if
\begin{equation}  \label{eq:subgaussian}
\k := \sup_{p \ge 1} p^{-1/2} (\E |a|^p)^{1/p} < \infty.
\end{equation}
The quantity $\k$ is called the {\em sub-gaussian norm} of $a$.
The class of sub-gaussian random variables includes in particular normal, Bernoulli and all bounded random variables.

\medskip

Our main result is a generalization of \cite[Theorem 1.1]{pv2012}, which 
states that when the measurement vectors $\va_{1}, \dots, \va_{m}$ are Gaussian, then 
\[
\|\vx-\widehat{\vx}\|_{2}^{2}\lesssim \frac{w(K)}{\sqrt{m}}
\]
with high probability.
Our generalization allows $\va_{i}$ to have coordinates with sub-gaussian distributions. 
The only important difference is that the error now has an additive dependence on $\|\vx\|_\infty$.
This serves to exclude extremely sparse signals, which can destroy recovery,
according to the example we discussed above. 

We will consider the noisy measurement model in which each 1-bit measurement is flipped with small probability
\begin{equation}
\label{eq:bit-flip model}
y_i = \eps_i \sign( \< \va_i, \vx \> ).
\end{equation}
Above, $\{\eps_i\}$ is a collection of i.i.d.~Bernoulli random variables satisfying $P(\eps_i = 1) = 1-p$ and $P(\eps_i = -1) = p$.
\begin{theorem}[Estimating a signal with random bit flips]\label{thm2}
  Let $a \in \R$ be a symmetric, sub-gaussian, and unit variance random variable with 
  $\k$ as in \eqref{eq:subgaussian}.  Let $\va_1, \dots, \va_m$ be 
  independent random vectors in $\R^n$ whose coordinates 
  are i.i.d.~copies of $a$. Consider a signal set $K\subseteq B_{2}^{n}$, and fix 
  $\vx \in K$ satisfying $\|\vx\|_2 = 1$. Let $\vy$ follow the 1-bit measurement model 
  of Equation \eqref{eq:bit-flip model}.  Then for each $\beta > 0$, with probability at least 
  $
  1 - 4e^{-\beta^2},
  $
  the solution 
  $\widehat{\vx}$ to the optimization problem~(\ref{optimization}) satisfies
  \begin{equation}
  \label{eq:no noise thm bound}
  \|\vx - \widehat{\vx}\|_2^2 \leq C\left( \k^3\|\vx\|_\infty^{1/2} + \frac{\k}{\sqrt{m}(2 - p)}(w(K) + \beta)\right).
  \end{equation}
\end{theorem}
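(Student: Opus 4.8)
The plan is to exploit the linearity of the objective in \eqref{optimization} and split it into a deterministic ``signal'' part and a mean-zero ``fluctuation'' part, bounding the latter through the mean width. Write $\vect{X}_i = y_i\va_i$ and $\overline{\vect{X}} = \frac1m\sum_{i=1}^m\vect{X}_i$, so that the objective is the linear form $\vx'\mapsto m\ip{\overline{\vect{X}},\vx'}$. Since $\widehat{\vx}$ maximizes it over $K$ and $\vx\in K$, optimality collapses to the single scalar inequality
\begin{equation*}
\ip{\overline{\vect{X}},\,\vx-\widehat{\vx}}\le 0.
\end{equation*}
Let $\vect{\mu}=\E\vect{X}_i$. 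Because $\eps_i$ is independent of $\va_i$ and $\E\eps_i=1-2p$, we have $\vect{\mu}=(1-2p)\,\vv$ with $\vv:=\E[\sign(\ip{\va,\vx})\va]$. Splitting $\overline{\vect{X}}=\vect{\mu}+(\overline{\vect{X}}-\vect{\mu})$ and rearranging yields
\begin{equation*}
\ip{\vect{\mu},\,\vx-\widehat{\vx}}\le\ip{\vect{\mu}-\overline{\vect{X}},\,\vx-\widehat{\vx}}\le\sup_{\vu\in K-K}\ip{\vect{\mu}-\overline{\vect{X}},\,\vu}=:E_{\mathrm{fluc}},
\end{equation*}
so it remains to bound the signal term from below and the fluctuation term from above.

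For the signal term, the only place the non-Gaussian nature intervenes is that $\vv$ must be shown close to a positive multiple of $\vx$. In the Gaussian case one has the exact identity $\E[\sign(\ip{\vg,\vx})\vg]=\lambda\vx$ with $\lambda=\sqrt{2/\pi}$; I would establish for sub-gaussian $\va$ that
\begin{equation*}
\twonorm{\vv-\lambda\vx}\le E_{\mathrm{app}}\lesssim\k^3\inftynorm{\vx}^{1/2}.
\end{equation*}
Granting this, since $\vx\in S^{n-1}$ and $\widehat{\vx}\in K\subseteq B_2^n$ the elementary bound $\twonorm{\vx-\widehat{\vx}}^2\le 2\ip{\vx,\vx-\widehat{\vx}}$ gives $\ip{\vx,\vx-\widehat{\vx}}\ge\tfrac12\twonorm{\vx-\widehat{\vx}}^2$, whence
\begin{equation*}
\ip{\vect{\mu},\,\vx-\widehat{\vx}}=(1-2p)\big(\lambda\ip{\vx,\vx-\widehat{\vx}}+\ip{\vv-\lambda\vx,\,\vx-\widehat{\vx}}\big)\ge(1-2p)\Big(\tfrac{\lambda}{2}\twonorm{\vx-\widehat{\vx}}^2-2E_{\mathrm{app}}\Big),
\end{equation*}
using $\twonorm{\vx-\widehat{\vx}}\le 2$ in the cross term.

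For the fluctuation term I would invoke standard sub-gaussian empirical-process theory. The vectors $\vect{X}_i-\vect{\mu}$ are i.i.d.\ and mean zero, and since $|\eps_i\sign(\ip{\va_i,\vx})|=1$ their linear marginals $\ip{\vect{X}_i-\vect{\mu},\vu}$ are sub-gaussian with norm $\lesssim\k\twonorm{\vu}$. Hence $\vu\mapsto\ip{\vect{\mu}-\overline{\vect{X}},\vu}$ has sub-gaussian increments at scale $\tfrac{\k}{\sqrt m}\twonorm{\vu-\vu'}$, and Talagrand's comparison (generic chaining) bounds its expected supremum over $K-K$ by $\tfrac{C\k}{\sqrt m}w(K)$; concentration of the supremum, together with $\diam(K-K)\lesssim 1$, adds a deviation $\tfrac{C\k}{\sqrt m}\beta$ at confidence $1-4e^{-\beta^2}$, giving $E_{\mathrm{fluc}}\lesssim\tfrac{\k}{\sqrt m}(w(K)+\beta)$. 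Combining this with the signal lower bound and the optimality inequality, then dividing through by $(1-2p)\tfrac{\lambda}{2}$, produces
\begin{equation*}
\twonorm{\vx-\widehat{\vx}}^2\lesssim E_{\mathrm{app}}+\frac{1}{1-2p}\cdot\frac{\k}{\sqrt m}\,(w(K)+\beta),
\end{equation*}
which is \eqref{eq:no noise thm bound}; note that the factor $1-2p$ cancels in the approximation term but survives in the denominator of the fluctuation term.

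The main obstacle is the approximation estimate $\twonorm{\vv-\lambda\vx}\lesssim\k^3\inftynorm{\vx}^{1/2}$. It suffices to control $\ip{\vv-\lambda\vx,\vu}$ for an arbitrary unit vector $\vu$, and this depends only on the joint law of the pair $(\ip{\va,\vx},\ip{\va,\vu})$; comparing this two-dimensional sub-gaussian law with the Gaussian law of matching covariance reduces the task to a Lindeberg replacement, swapping the coordinates of $\va$ to Gaussians one at a time. The difficulty is that the test function $(b,c)\mapsto\sign(b)\,c$ is discontinuous in $b$, so I would first mollify $\sign$ at a scale $\delta$: the smoothing error is $O(\delta)$, while each swap contributes a higher-order term governed by $\E|a|^3\lesssim\k^3$ and by the derivatives of the mollified sign. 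Because $c$ enters linearly, only the coordinates acting through $\vx$ are amplified by the smoothing, and they aggregate as $\sum_k|x_k|^3\le\inftynorm{\vx}$; the swaps therefore sum to $\k^3\inftynorm{\vx}$ times an inverse power of $\delta$, and optimizing $\delta$ balances this against the $O(\delta)$ smoothing error to yield the exponent $\tfrac12$. Tracking the precise $\delta$-dependence of these derivative bounds and the powers of $\k$ through the swap is the delicate part; everything else reduces to the optimality inequality and the standard concentration described above.
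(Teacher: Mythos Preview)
Your overall architecture---optimality inequality, split into a ``signal'' part governed by $\vv=\E[\sign(\ip{\va,\vx})\va]$ and a mean-zero ``fluctuation'' part controlled by the mean width---matches the paper exactly (its Lemma~\ref{lem:general structure} and Proposition~\ref{concentration}). Your route to the concentration bound via generic chaining is different from the paper's symmetrization/contraction argument, but either works.

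The gap is in the approximation estimate $\|\vv-\lambda\vx\|_2\lesssim\kappa^3\|\vx\|_\infty^{1/2}$. Your proposed mollify-then-Lindeberg scheme does not reach the exponent $\tfrac12$. With $\theta_\delta$ a smooth surrogate of $\sign$ at scale $\delta$, the bivariate Lindeberg (the paper's Lemma~\ref{fourth}) yields an error of order $(\tau_2+\tau_3)\,\E a^4\,\|\vx\|_\infty$, and for a mollified sign one has $\tau_2\asymp\delta^{-2}$, $\tau_3\asymp\delta^{-3}$; there is no way to make the swap cost scale like $\delta^{-1}$, because matching the first two moments forces the remainder to third order. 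Even granting your claim that the smoothing error is $O(\delta)$ (which itself needs care: controlling $\E[\mathbf 1_{|\ip{\va,\vx}|\le\delta}\,|\ip{\va,\vu^\perp}|]$ uniformly in $\vu$ is not immediate without independence, and Cauchy--Schwarz only gives $O(\sqrt{\delta+\kappa^3\|\vx\|_\infty})$), balancing gives at best $\|\vx\|_\infty^{1/4}$. The paper explicitly flags this: ``One approach would be to approximate $\theta$ by a smooth function and apply the Berry--Esseen type central limit theorem\ldots However, we achieve a tighter bound with a different approach.''

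The paper's substitute is a short geometric argument that avoids any bivariate Berry--Esseen for the discontinuous test function. It proves two \emph{scalar} facts about $\vv_x$: first $|\ip{\vv_x,\vx}-\lambda|\le C\,\E|a|^3\|\vx\|_\infty$, which is just $|\E|\ip{\va,\vx}|-\E|g||$ and follows from the one-dimensional integrated Berry--Esseen bound \eqref{eq:strk one norm}; second $\|\vv_x\|_2\le\lambda+C(\E|a|^3)^2\|\vx\|_\infty$, obtained by first bounding $\|\vv_x\|_\infty$ coordinatewise via the sup-norm Berry--Esseen bound \eqref{eq:strk inf norm} and then reapplying the first fact with $\vz=\vv_x/\|\vv_x\|_2$ in place of $\vx$. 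These two constraints (a hyperplane and a ball, meeting tangentially at $\lambda\vx$) together force
\[
|\ip{\vv_x,\vx'}-\lambda\ip{\vx,\vx'}|^2\le\|\vv_x\|_2^2-\lambda^2+2\lambda(\lambda-\ip{\vv_x,\vx})\lesssim(\E|a|^3)^2\|\vx\|_\infty,
\]
which is exactly the $\|\vx\|_\infty^{1/2}$ bound. This reduction to one-dimensional Berry--Esseen is the idea your sketch is missing.

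A minor note: your factor $1-2p=\E\eps_i$ is the correct one; the paper writes $2-p$, which appears to be a slip.
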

In this theorem and later, $C$ and $c$ denote positive absolute constants, which can be different
from line to line.

A proof of Theorem~\ref{thm2} is given in Section~\ref{thm2p}.

\medskip

This theorem can be easily specialized to sparse (and approximately sparse) signals.
To this end, we consider $K = K_{n,s}$ as in \eqref{eq: Kns}.  
A standard computation (see \cite[Equation 3.3]{pv2012}) shows that
\[w(K_{n,s}) \leq C\sqrt{s \log(2n/s)}.\]
Then the following corollary follows directly from Theorem \ref{thm2}.

\begin{corollary}[Estimating a sparse signal with random bit flips]			\label{cor:sparse}
Let $K = K_{n,s}$, $s \geq 1$, and let everything else be as in Theorem \ref{thm2}.  Then with probability at least
$
1 - 4\exp\left\{-2 s \log(2n/s)\right\} \geq 1 - \frac{1}{n^2},
$
the solution $\widehat{\vx}$ to the optimization problem~(\ref{optimization}) satisfies
\[
\|\vx - \widehat{\vx}\|_2^2 \leq C\left(\k^3 \|\vx\|_\infty^{1/2} + \k \sqrt{\frac{s \log(n/s)}{m(2-p)^2}}\right).
\]
\end{corollary}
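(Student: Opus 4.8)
The plan is to obtain the corollary as a direct specialization of Theorem~\ref{thm2} to the convex relaxation $K = K_{n,s}$, using the quoted mean-width estimate $w(K_{n,s}) \leq C\sqrt{s\log(2n/s)}$ to control the geometric term and a suitable choice of the free parameter $\beta$ to convert the theorem's confidence level $1 - 4e^{-\beta^2}$ into the one stated in the corollary.

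First I would apply Theorem~\ref{thm2} verbatim with $K = K_{n,s}$, which gives, with probability at least $1 - 4e^{-\beta^2}$,
$$
\|\vx - \widehat{\vx}\|_2^2 \leq C\left(\k^3\|\vx\|_\infty^{1/2} + \frac{\k}{\sqrt{m}(2-p)}\bigl(w(K_{n,s}) + \beta\bigr)\right).
$$
Next I would insert the mean-width bound $w(K_{n,s}) \leq C\sqrt{s\log(2n/s)}$ into the right-hand side, and to reproduce the advertised confidence level I would select $\beta := \sqrt{2s\log(2n/s)}$, so that $4e^{-\beta^2} = 4\exp\{-2s\log(2n/s)\}$ exactly. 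With this choice both $w(K_{n,s})$ and $\beta$ are of order $\sqrt{s\log(2n/s)}$, so their sum is at most $C'\sqrt{s\log(2n/s)}$, and the second error term collapses to $C'\k\sqrt{s\log(2n/s)/(m(2-p)^2)}$ after pulling the $\sqrt{m}$ and $(2-p)$ factors under the radical.

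It then remains to reconcile the two cosmetic differences between this expression and the statement of the corollary. Since $s \geq 1$ and (in the meaningful regime) $n/s$ is bounded below by a constant exceeding one, one has $\log(2n/s) \leq 2\log(n/s)$, so the factor $\log(2n/s)$ may be replaced by $\log(n/s)$ at the cost of enlarging the absolute constant $C$; this yields the displayed error bound. For the probability, I would verify the chain $1 - 4\exp\{-2s\log(2n/s)\} \geq 1 - 1/n^2$ by checking $4(2n/s)^{-2s} \leq n^{-2}$: equality holds at $s = 1$, where $4(2n)^{-2} = n^{-2}$, and differentiating $f(s) = 2s\log(2n/s)$ gives $f'(s) = 2(\log(2n/s) - 1) > 0$ whenever $2n/s > e$, so the failure probability only decreases as $s$ grows in the relevant range.

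The honest assessment is that there is no substantive obstacle here: all the analytic work is carried by Theorem~\ref{thm2} together with the known mean-width computation for $K_{n,s}$. The only points requiring any care are purely arithmetic—choosing $\beta$ to match the stated confidence level, and confirming the elementary inequalities $\log(2n/s) \lesssim \log(n/s)$ and $4\exp\{-2s\log(2n/s)\} \leq n^{-2}$ that let one present the bound in its cleanest form.
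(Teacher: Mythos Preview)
Your proposal is correct and is precisely the approach the paper has in mind: the paper states that the corollary ``follows directly from Theorem~\ref{thm2}'' once one substitutes the quoted bound $w(K_{n,s}) \leq C\sqrt{s\log(2n/s)}$, and your choice $\beta = \sqrt{2s\log(2n/s)}$ is exactly what is needed to reproduce the stated confidence level. The only thing you add beyond the paper is the explicit arithmetic verification of the cosmetic inequalities, which the paper leaves implicit.
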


In words, this result yields that if the signal is approximately $s$-sparse, but not extremely sparse so that
$\|\vx\|_\infty \ll \|\vx\|_2 = 1$, then with high probability $\vx$ can be accurately recovered 
from 
$$
m = O(s \log (n/s))
$$
general sub-gaussian measurements---provided that at most a constant fraction of bits are randomly flipped.  Interestingly, accurate reconstruction is still possible even if nearly half of the bits are flipped.

\medskip

We also establish a version of Theorem \ref{thm2} under a statistical model, which also corresponds to additive noise before quantization. 
We take the \textit{generalized linear model} in which each measurement is modeled by a random variable $y_i$ 
taking values in $\{-1,1\}$
such that 
\begin{equation}\label{noise}
\E(y_i | \va_i) = \theta(\ip{\va_i, \vx}), \qquad i=1, 2, \dots, m.
\end{equation}
Conditionally on $\{\va_i\}$, the measurements  $y_i$ are assumed independent. 
$\theta:\R^{d}\to [-1,1]$ is a measurable function, which may even be unknown or unspecified.  
We only assume that $\theta(t) \in C^3(\R)$, the first three derivatives being bounded by $\tau_1, \tau_2, \tau_3$ respectively, and that
\begin{equation}			\label{lambda}
\E\theta(g)g =: \lambda > 0
\end{equation}
where $g \sim N(0,1)$. To see why this is a natural assumption, notice that 
$\ip{\va_{i}, \vx}\sim\mathcal{N}(0,1)$ if $\va_{i}$ are standard Gaussian random vectors
and $\|x\|_{2}=1$; thus
\[
\E y_{i}\ip{\va_{i}, \vx}=\E\theta(g)g=\lambda. 
\]


For example, in sparse 
 logistic regression one would take 
\[
\theta(t)= \tanh (t/2),
\]
with bounds $\tau_1 = 0.5$, $\tau_2\approx 0.19$, $\tau_3\approx 0.083$ and $\lambda \approx 0.41$. 

To note another important example, observe that the setting of Theorem \ref{thm2} is described by choosing $\theta(t) = \sign(t)$ and disregarding the differentiability requirements. In this case, $\lambda = \E \theta(g)g=\E\abs{g}=\sqrt{2/\pi}$.

The following is a version of Theorem \ref{thm2} under this noisy or statistical model.

\begin{theorem}[Estimating a spread signal in the generalized linear model]\label{thm1}
  We remain in the setting of Theorem~\ref{thm2}, but with random measurements $y_i$ modeled as in   
  Equation (\ref{noise}). Then for each $\beta > 0$, with probability at least 
  $
  1 - 4e^{-\beta^2},
  $
  the solution $\widehat{\vx}$ to the optimization 
  problem~(\ref{optimization}) satisfies
  \begin{equation}\label{eq:random noise theorem}
  \|\vx - \widehat{\vx}\|_2^2 \leq C \left(\frac{\k^4}{\lambda}
  (\tau_2 + \tau_3)\|\vx\|_\infty  + \frac{\k}{\lambda\sqrt{m}}(w(K) + \beta) \right).
  \end{equation}
\end{theorem}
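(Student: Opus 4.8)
The plan is to follow the proof of Theorem~\ref{thm2} essentially verbatim, isolating the one place where the two models differ. Write $\vu := \frac{1}{m}\sum_{i=1}^m y_i\va_i$, so that the objective in \eqref{optimization} equals $m\ip{\vu,\vx'}$ and $\widehat{\vx}$ maximizes $\ip{\vu,\vx'}$ over $\vx'\in K$. Set $\bm{\mu} := \E[y_1\va_1]$; conditioning on $\va_1$ and using \eqref{noise} gives $\bm{\mu} = \E[\theta(\ip{\va_1,\vx})\va_1]$. The heart of the matter is to show that $\bm{\mu}$ is close to the signal direction $\lambda\vx$, after which the convex-programming skeleton of Theorem~\ref{thm2} applies with no change.

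For that skeleton, optimality of $\widehat{\vx}$ together with $\vx\in K$ gives $\ip{\vu,\widehat{\vx}-\vx}\ge 0$. Splitting $\vu=\lambda\vx+(\bm{\mu}-\lambda\vx)+(\vu-\bm{\mu})$ and rearranging yields
\[
\lambda\ip{\vx,\vx-\widehat{\vx}}\ \le\ \ip{\bm{\mu}-\lambda\vx,\widehat{\vx}-\vx}+\ip{\vu-\bm{\mu},\widehat{\vx}-\vx}.
\]
Since $\vx,\widehat{\vx}\in K\subseteq B_2^n$ with $\twonorm{\vx}=1$, one has $\ip{\vx,\vx-\widehat{\vx}}\ge\tfrac12\twonorm{\vx-\widehat{\vx}}^2$, and both increments lie in $K-K$, so the right-hand side is at most $\sup_{\vz\in K-K}\ip{\bm{\mu}-\lambda\vx,\vz}+\sup_{\vz\in K-K}\ip{\vu-\bm{\mu},\vz}$. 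This reduces the theorem to bounding a deterministic \emph{bias} term and a random \emph{noise} term and then dividing by $\lambda$. The noise term is handled exactly as in Theorem~\ref{thm2}: its analysis uses only that $|y_i|=1$ and that $\va_i$ has independent unit-variance sub-gaussian coordinates, neither of which involves $\theta$. Reusing that argument gives $\sup_{\vz\in K-K}\ip{\vu-\bm{\mu},\vz}\le \frac{C\kappa}{\sqrt{m}}(w(K)+\beta)$ with probability at least $1-4e^{-\beta^2}$, which is the source of the stated probability.

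The main obstacle is the bias term, namely the estimate $\twonorm{\bm{\mu}-\lambda\vx}\le C\kappa^4(\tau_2+\tau_3)\|\vx\|_\infty$; this controls $\sup_{\vz\in K-K}\ip{\bm{\mu}-\lambda\vx,\vz}\le 2\twonorm{\bm{\mu}-\lambda\vx}$ independently of the shape of $K$, which is why no $w(K)$ appears in front of $\|\vx\|_\infty$. For a fixed unit vector $\vz$ I would write $\ip{\bm{\mu}-\lambda\vx,\vz}=\E F(\va)-\E F(\vg)$, where $F(\vw)=\theta(\ip{\vw,\vx})\ip{\vw,\vz}$ and $\vg$ is standard Gaussian. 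Stein's lemma for the jointly Gaussian pair $(\ip{\vg,\vx},\ip{\vg,\vz})$, together with $\twonorm{\vx}=1$ and $\lambda=\E[\theta(g)g]=\E[\theta'(g)]$, gives $\E F(\vg)=\lambda\ip{\vx,\vz}$, so the Gaussian side already produces the target direction.

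It then remains to compare $\E F(\va)$ with $\E F(\vg)$, and here I would run Lindeberg's replacement method, exchanging the coordinates of $\va$ for those of $\vg$ one at a time. Because $a_k$ and $g_k$ share mean $0$ and variance $1$, the zeroth-, first-, and second-order terms of the Taylor expansion of $F$ in the swapped coordinate cancel in each step; the surviving third-order Lagrange remainder is governed by the third derivatives of $G(s,t):=\theta(s)t$, namely $G_{sss}=\theta'''(s)\,t$ and $G_{sst}=\theta''(s)$, and is therefore bounded by $\tau_3$ and $\tau_2$ times the third and fourth absolute moments of $a$, which are $O(\kappa^4)$. Summing the per-coordinate errors and using $\sum_k|x_k|^3\le\|\vx\|_\infty$ and $\sum_k x_k^2|z_k|\le\|\vx\|_\infty\twonorm{\vz}$ extracts the factor $\|\vx\|_\infty$ and yields $|\ip{\bm{\mu}-\lambda\vx,\vz}|\le C\kappa^4(\tau_2+\tau_3)\|\vx\|_\infty\twonorm{\vz}$; taking the supremum over $\twonorm{\vz}\le1$ completes the estimate. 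The delicate point is precisely this bookkeeping: one must show the remainder carries a full power of $\|\vx\|_\infty$ rather than a mere constant, and it is exactly the $C^3$ smoothness of $\theta$ (absent in Theorem~\ref{thm2}, where $\theta=\sign$ is non-smooth) that lets the third-order Lindeberg estimate go through and improves the exponent from $\|\vx\|_\infty^{1/2}$ to $\|\vx\|_\infty$.
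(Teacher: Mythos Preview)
Your proposal is correct and mirrors the paper's own argument: the optimality-plus-concentration skeleton you describe is exactly Lemma~\ref{lem:general structure} combined with Proposition~\ref{concentration}, and your Lindeberg replacement for the bias $\ip{\bm{\mu}-\lambda\vx,\vz}$ is precisely the content of Lemma~\ref{fourth} (with your Stein's-lemma computation of $\E F(\vg)=\lambda\ip{\vx,\vz}$ supplying what the paper cites from \cite{pv2012}). The one detail you elide is that $G_{sss}=\theta'''(s)\,t$ carries a factor of $t$, so the Lagrange remainder at the intermediate point picks up $|(S_i')_2|\le|(S_i)_2|+|a_iz_i|$, and it is the cross term $|a_i|^3\cdot|a_iz_i|$ that brings in $\E a^4$ and hence the full $\kappa^4$.
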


For Gaussian measurement vectors $\va_i$, a version of this theorem was proved in 
\cite{pv2012}.

The proof of Theorem~\ref{thm1} is provided in Section~\ref{thm1p}.

An interested reader may specialize this result to sparse signals $\vx$ 
as we did before, i.e. by taking $K = K_{n,s}$ and noting as in Corollary \ref{cor:sparse} 
that $w(K_{n,s}) \leq C \sqrt{\log(2n/s)}$.

\medskip

Our last result is about sub-Gaussian distributions, which nevertheless are close to Gaussian 
in total variation. For such measurements, it is reasonable to expect that the same conclusions 
as for Gaussian measurements hold, i.e.,~that the theorems above 
hold for all signals $\vx$ without any dependence on $\|\vx\|_\infty$.
We confirm that this is the case. Suppose that the coordinates of $\va_i$ are i.i.d. copies 
of a random variable $a$ that satisfies the total variation bound
$$
\|a - g\|_{TV} := \sup_{A}|P(a \in A) - P(g \in A)| \leq \eps
$$
where $g \sim N(0,1)$. 
In the case when $\theta(t) = \sign(t)$, one has 
\[
\|\vx -\widehat{\vx}\|_{2}^{2} \lesssim \eps^{1/8}+\frac{w(K)}{\sqrt{m}},
\]
and in the case when $\theta(t) \in C^2$ one has
\[
\|\vx-\widehat{\vx}\|_{2}^{2} \lesssim \eps^{1/2}+\frac{w(K)}{\sqrt{m}}. 
\]
Above, the $\lesssim$ notation hides dependence on $\kappa, \lambda, \tau_1, \tau_2$ and a numeric constant.
The precise results and their proofs are provided in the appendix as 
Theorems $\ref{thm3}$ and $\ref{thm4}$, respectively. 

\section*{Acknowledgements}
The authors would like to thank the anonymous reviewers for thoughtful comments which helped to simplify the presentation and tighten results.

\section{General proof structure}
In this section we give the general structure behind the proofs of our theorems.  We also give a general lemma which may be useful to other researchers who wish to develope theory for 1-bit compressed sensing under different measurement models.

It will be convenient to define the (rescaled) objective function for our convex program \eqref{optimization}:
\[
 f_{\vx} (\vx') := \frac{1}{m} \sum_{i = 1}^m y_i \ip{\va_i, \vx'}.
\]
Note that this is a random function whose distribution depends on the distribution of $\{\va_i\}$ and choice of $\theta$.  In order to demonstrate that $\hat{\vx}$ is a good approximation of $\vx$, we will need to control the expectation of $f$ and the variation of $f$ around its expectation.  It turns out that $f$ uniformly concentrates around its expectation value regardless of $\theta$, but with dependence on the sub-Gaussian norm of $\va_i$.

\begin{proposition}[Concentration]\label{concentration}
  For each $\beta > 0$,
  \[
  P\parenth{\sup_{\vz\in K - K} |f_{\vx}(\vz) - \E f_{\vx}(\vz)| \geq 
  C \k \frac{w(K) + \beta}{\sqrt{m}} } \leq 4 e^{-\beta^2}.
  \]
\end{proposition}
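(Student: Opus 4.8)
The plan is to exploit the linearity of $f_{\vx}$ in its argument and to reduce the claim to a standard tail bound for the supremum of a sub-gaussian process, converting the resulting complexity parameter into the Gaussian mean width $w(K)$.

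First I would observe that $f_{\vx}(\vz) - \E f_{\vx}(\vz) = \ip{\vh,\vz}$, where $\vh := \frac{1}{m}\sum_{i=1}^m\parenth{y_i\va_i - \E y_i\va_i}$, since $f_{\vx}$ is linear in $\vz$. Thus the quantity to control is $\sup_{\vz\in K-K}\abs{\ip{\vh,\vz}}$, the random width of $K-K$ in the direction of the centered random vector $\vh$. The crucial step is then to show that the process $\vz\mapsto\ip{\vh,\vz}$ has sub-gaussian increments with parameter proportional to $\k/\sqrt m$, i.e.\ $\norm{\ip{\vh,\vz-\vz'}}_{\psi_2}\leq C\frac{\k}{\sqrt m}\twonorm{\vz-\vz'}$ for all $\vz,\vz'\in K-K$. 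For a single summand $y_i\ip{\va_i,\vu}$ the apparent obstacle is that $y_i$ and $\va_i$ are dependent. This is resolved by noting that $\abs{y_i}=1$ almost surely, so $\abs{y_i\ip{\va_i,\vu}}=\abs{\ip{\va_i,\vu}}$ pointwise and hence $\norm{y_i\ip{\va_i,\vu}}_{\psi_2}=\norm{\ip{\va_i,\vu}}_{\psi_2}$, irrespective of the coupling between $y_i$ and $\va_i$. Since the coordinates of $\va_i$ are i.i.d.\ sub-gaussian, rotation invariance of the sub-gaussian norm gives $\norm{\ip{\va_i,\vu}}_{\psi_2}\leq C\k\twonorm{\vu}$, and centering costs only a constant factor; averaging $m$ independent centered summands contributes the factor $1/\sqrt m$, yielding the increment bound with $\vu=\vz-\vz'$.

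With the increments controlled, I would apply a standard high-probability tail bound for the supremum of a sub-gaussian process, obtained via generic chaining (the majorizing-measures theorem; see, e.g., \cite{Vershynin2012}), which states that a process with increment constant $L$ with respect to $\twonorm{\cdot}$ satisfies $P\parenth{\sup_{\vz,\vz'\in K-K}\abs{\ip{\vh,\vz-\vz'}} \geq CL\brkts{w(K) + u\,\diam(K-K)}}\leq 2e^{-u^2}$; the Gaussian mean width appears precisely because the majorizing-measures theorem identifies $\gamma_2(K-K,\twonorm{\cdot})$ with $w(K)=\E\sup_{\vz\in K-K}\ip{\vg,\vz}$ up to absolute constants. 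Anchoring at $\vz_0=\vect{0}\in K-K$, where $\ip{\vh,\vz_0}=0$, gives $\sup_{\vz}\abs{\ip{\vh,\vz}}\leq\sup_{\vz,\vz'}\abs{\ip{\vh,\vz-\vz'}}$; bounding $\diam(K-K)\leq 4$ via $K\subseteq B_2^n$ and taking $L=C\k/\sqrt m$ and $u=\beta$ collapses the bound to $C\k\,(w(K)+\beta)/\sqrt m$ with failure probability $2e^{-\beta^2}\leq 4e^{-\beta^2}$.

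I expect the sub-gaussian increment estimate to be the main obstacle --- chiefly the clean handling of the dependence between $y_i$ and $\va_i$ through $\abs{y_i}=1$, and the careful tracking of constants through centering and averaging so that the effective parameter is exactly of order $\k/\sqrt m$. Once the correct metric $d(\vz,\vz')=\frac{C\k}{\sqrt m}\twonorm{\vz-\vz'}$ is in place, the invocation of the chaining tail bound and the conversion of $\gamma_2$ into $w(K)$ are entirely standard, and notably the whole argument is insensitive to the choice of $\theta$ (equivalently, to the law of the bit flips), since it uses nothing about $y_i$ beyond $\abs{y_i}=1$.
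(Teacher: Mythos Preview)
Your proposal is correct, but it takes a genuinely different route from the paper's proof.

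You argue directly that the centered process $\vz\mapsto\ip{\vh,\vz}$ has sub-gaussian increments with constant $C\k/\sqrt m$ in the Euclidean metric, and then invoke the generic chaining tail bound together with Talagrand's majorizing measures theorem to identify $\gamma_2(K-K,\twonorm{\cdot})$ with $w(K)$. The paper instead proceeds via the classical Banach-space probability toolkit: it first symmetrizes, replacing $Z$ by $Z':=\sup_{\vz\in K-K}\frac{1}{m}\bigl|\sum_i\eps_i y_i\ip{\va_i,\vz}\bigr|$; then, using that $\eps_i y_i\va_i\overset{d}{=}\va_i$ (by symmetry of $a$ and $|y_i|=1$), it applies a contraction principle coordinatewise to replace each sub-gaussian entry $(a_i)_j$ by a Gaussian $(g_i)_j$ at the cost of a factor $C\k$ in the $p$-th moment. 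Since $\frac{1}{m}\sum_i\vg_i\overset{d}{=}\frac{1}{\sqrt m}\vg$, this lands directly on $\frac{C\k}{\sqrt m}\sup_{\vz\in K-K}\ip{\vg,\vz}$, whose moments are controlled by Gaussian concentration; a Markov argument then converts moments to tails, and symmetrization transfers the tail bound back to $Z$.

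The trade-off is clear: your argument is shorter and conceptually cleaner, but it leans on the heavy machinery of the majorizing measures theorem to pass from $\gamma_2$ to $w(K)$. The paper's route is more elementary --- symmetrization, a contraction/comparison lemma, and Gaussian concentration are all classical and self-contained --- and it reaches the Gaussian mean width \emph{literally}, by swapping sub-gaussian coordinates for Gaussian ones, rather than through a comparison of chaining functionals. One small remark: your phrase ``rotation invariance of the sub-gaussian norm'' is not quite accurate; the bound $\norm{\ip{\va_i,\vu}}_{\psi_2}\leq C\k\twonorm{\vu}$ is the sub-gaussian Hoeffding inequality for sums of independent sub-gaussians, not a rotation-invariance statement.
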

The proof of Proposition~\ref{concentration} is provided in Section~\ref{concentrationp}.

We turn to the expectation of $f_{\vx}(\vx')$.  In the special case of standard normal measurement vectors $\va_i$, it is not hard to show that
\[\E f_{\vx}(\vx') = \lambda \< \vx, \vx' \> .\]
(See \cite[Lemma 4.1]{pv2012}.)  If we allow sub-gaussian measurement vectors, the equality no longer holds, but under some conditions on $\vx$ and $\theta$, we still have
\[\E f_{\vx}(\vx') \approx \lambda \< \vx, \vx' \> .\]
We will prove the above approximate equality in subsequent sections, but for now we see how it implies accurate reconstruction of $\vx$.

\begin{lemma}
\label{lem:general structure}
Fix $\vx \in K$ with $\twonorm{\vx} = 1$.  Let $\theta: \R \rightarrow [-1, 1]$ be measurable and suppose that $\vy$ follows the generalized linear model \eqref{noise}.  Let $\alpha > 0$ and suppose that for any $\vx' \in K$,
\begin{equation}
\label{eq:bound expectation}
\abs{\E f_{\vx}(\vx') - \lambda \< \vx, \vx' \>} \leq \alpha.
\end{equation}
Then for all $\beta > 0$, with probability at last $1 - 4 \e^{-\beta^2}$ the solution $\hat{\vx}$ to \eqref{optimization} satisfies
\[\twonorm{\hat{\vx} - \vx}^2 \leq \frac{4\alpha}{\lambda} + C \kappa \frac{w(K) + \beta}{ \lambda \sqrt{m}}.\]
\end{lemma}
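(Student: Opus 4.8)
The plan is to exploit the linearity of $f_{\vx}$ together with the optimality of $\widehat{\vx}$, arranging the argument so that the only probabilistic input is the uniform concentration estimate of Proposition~\ref{concentration}; everything else will be a deterministic chain of inequalities. First I would record the optimality inequality: since $\widehat{\vx}$ maximizes the objective in \eqref{optimization} over $K$ and $\vx \in K$, we have $f_{\vx}(\widehat{\vx}) \geq f_{\vx}(\vx)$. Because $f_{\vx}$ is linear in its argument, this rearranges to $f_{\vx}(\widehat{\vx} - \vx) \geq 0$, where the increment $\widehat{\vx} - \vx$ lies in $K - K$. This linearization of the suboptimality gap into a single evaluation on $K - K$ is what lets Proposition~\ref{concentration} be applied directly.

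Next I would condition on the event---of probability at least $1 - 4e^{-\beta^2}$ by Proposition~\ref{concentration}---on which $\sup_{\vz \in K - K}\abs{f_{\vx}(\vz) - \E f_{\vx}(\vz)} \leq \delta$, where I set $\delta := C\kappa (w(K)+\beta)/\sqrt{m}$. Applying this with $\vz = \widehat{\vx} - \vx$ and combining with $f_{\vx}(\widehat{\vx}-\vx) \geq 0$ yields $\E f_{\vx}(\widehat{\vx} - \vx) \geq -\delta$. Using linearity of expectation and the hypothesis \eqref{eq:bound expectation} applied at both $\widehat{\vx}$ and $\vx$ (recalling $\twonorm{\vx}=1$, so $\lambda\ip{\vx,\vx} = \lambda$), I would bound $\E f_{\vx}(\widehat{\vx}-\vx) = \E f_{\vx}(\widehat{\vx}) - \E f_{\vx}(\vx) \leq \lambda(\ip{\vx,\widehat{\vx}} - 1) + 2\alpha$. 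The two estimates together give $\lambda(1 - \ip{\vx,\widehat{\vx}}) \leq 2\alpha + \delta$.

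Finally I would convert this inner-product estimate into the squared distance. Since $K \subseteq B_2^n$ we have $\twonorm{\widehat{\vx}} \leq 1$, hence $\twonorm{\widehat{\vx}-\vx}^2 = \twonorm{\widehat{\vx}}^2 + 1 - 2\ip{\vx,\widehat{\vx}} \leq 2(1 - \ip{\vx,\widehat{\vx}})$; this is where the norm constraint does its work, linearizing the distance in $\ip{\vx,\widehat{\vx}}$. Dividing the previous bound by $\lambda$ and multiplying by $2$ then gives $\twonorm{\widehat{\vx}-\vx}^2 \leq 4\alpha/\lambda + 2\delta/\lambda$, which is the claimed bound after absorbing the factor $2$ into the constant $C$.

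The only genuinely hard ingredient is the uniform concentration of Proposition~\ref{concentration}, which is quoted here and proved separately; conditioned on its good event the remaining argument is elementary. The two structural facts I would emphasize are the linearity of $f_{\vx}$, which turns the optimality gap into a single increment lying in $K-K$, and the constraint $\twonorm{\widehat{\vx}}\le 1$, which lets the Euclidean error be controlled by the correlation $\ip{\vx,\widehat{\vx}}$ alone.
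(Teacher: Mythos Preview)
Your proof is correct and follows essentially the same approach as the paper: both combine optimality of $\widehat{\vx}$ (giving $f_{\vx}(\widehat{\vx}-\vx)\ge 0$), the uniform concentration of Proposition~\ref{concentration}, the hypothesis \eqref{eq:bound expectation} applied at $\vx$ and $\widehat{\vx}$, and the inequality $\twonorm{\widehat{\vx}-\vx}^2 \le 2(1-\ip{\vx,\widehat{\vx}})$ coming from $\twonorm{\widehat{\vx}}\le 1$. The only cosmetic difference is the order in which these ingredients are introduced.
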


\begin{proof}
Fix $\vx' \in K$.  We will show that $f_{\vx}(\vx')$ can only be large if $\vx'$ is near to $\vx$, and then use this to show that the maximizer $\hat{\vx}$ must be accurate.  

Let $\vz = \vx' - \vx \in K - K$. 
We have,
\[-\E f_{\vx}(\vz) = \E f_{\vx}(\vx) - \E f_{\vx}(\vx') \geq \ip{ \lambda \vx, \vx }  
- \ip{ \lambda \vx, \vx' } - 2\alpha \geq \frac{\lambda}{2} \twonorm{ \vx - \vx'}^2 - 2 \alpha.\]
The first inequality follows from Equation \eqref{eq:bound expectation} and the second follows from $\twonorm{\vx} = 1$.

Further, by Proposition~\ref{concentration}, we have a lower bound of
$
1- 4e^{-\beta^2}
$
on the event
$$\sup_{\vz\in K - K} |f_{\vx}(\vz) - \E f_{\vx}(\vz)| \le C \k \frac{w(K) + \beta}{\sqrt{m}}.$$
In this event, note that
\[
 f_{\vx}(\vz) \leq \E f_{\vx}(\vz) + C \k \frac{w(K) + \beta}{\sqrt{m}}
 \leq 2 \alpha - \frac{\lambda}{2} \|\vx - \vx'\|_2^2 + C \k \frac{w(K) + \beta}{\sqrt{m}}.\]
This holds uniformly for all $\vx' \in K$.  Pick $\vx' = \widehat{\vx}$.  Since $\hat{\vx}$ maximizes $f_{\vx}$ we have $f_{\vx}(\vz) = f_{\vx}(\widehat{\vx}) - f_{\vx}(\vx) \geq 0$. Thus the right-hand side of the above inequality is bounded below by 0.   Rearranging completes the proof of the lemma.
\end{proof}

\section{Proof of Theorem~\ref{thm1}}\label{thm1p}

We only need to bound $\alpha$ in Equation \eqref{eq:bound expectation}.  
For convenience, let us denote $y := y_1$ and $\va := \va_1$.
Recalling \eqref{noise}, we observe the following equivalences:
\begin{equation}							\label{expectation fx}
\E f_{\vx} (\vx') 
= \frac{1}{m} \sum_{i = 1}^m \E y_i \ip{ \va_i, \vx'} 
= \E y \ip{ \va, \vx'} 
= \E(\E y\ip{ \va, \vx' } | \va) 
= \E\theta(\ip{ \va, \vx }) \ip{ \va, \vx' }.
\end{equation}
We also note that for a standard normal vector $\vg$, $\E\theta(\ip{ \vg, \vx }) \ip{ \vg, \vx' } = \lambda \< \vx, \vx' \> $, which satisfies Equation \eqref{eq:bound expectation} with $\alpha = 0$ (see \cite[Lemma 4.1]{pv2012}).  Thus, we need to show that the expectation in the sub-gaussian case nearly matches the Gaussian case.  
Such a comparison is a bi-variate version of Berry-Esseen central limit theorem for the function  
$\theta(\ip{ \va, \vx }) \ip{ \va, \vx' }$.  

\begin{lemma}[Berry-Esseen type central limit theorem] \label{fourth}
  Consider $\vx, \vz \in B_2^n$.  Let $\va$ be a random vector with i.i.d.~mean-zero, variance-one, sub-gaussian entries whose sub-gaussian norm is bounded by $\kappa$.  Let $\vg$ be a vector with independent standard normal entries. Let $\theta :\R \rightarrow \R$ be a measurable function satisfying $\norm{\theta''}_\infty \leq \tau_2$ and $\norm{\theta'''}_\infty \leq \tau_3$.  Then 
  \begin{equation}
  \label{eq:berry esseen}
|\E \theta(\ip{ \va, \vx })\ip{\va, \vz } - \E \theta(\ip{ \vg, \vx })\ip{ \vg, \vz } |
\leq  C(\tau_2 + \tau_3)\E a^4 \| \vx \|_\infty,
\end{equation}
\end{lemma}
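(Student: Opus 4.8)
The plan is to prove the estimate by a Lindeberg replacement (swapping) argument, comparing $\E F(\va)$ with $\E F(\vg)$ for the fixed function $F(\vw) := \theta(\ip{\vw,\vx})\ip{\vw,\vz}$ one coordinate at a time. First I would introduce the hybrid vectors $\vw^{(k)}$ whose first $k$ coordinates are i.i.d.\ copies of $a$ and whose remaining coordinates are standard normals, so that $\vw^{(n)}$ and $\vw^{(0)}$ have the laws of $\va$ and $\vg$, and telescope:
\[
\E F(\va) - \E F(\vg) = \sum_{k=1}^n \parenth{\E F(\vw^{(k)}) - \E F(\vw^{(k-1)})}.
\]
The two vectors in the $k$-th summand differ only in the $k$-th coordinate, where $\vw^{(k)}$ uses $a_k$ and $\vw^{(k-1)}$ uses $g_k$; all other coordinates form a common part $\vw^0$ (the hybrid with its $k$-th entry set to $0$) that is independent of both $a_k$ and $g_k$.

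Next I would reduce each summand to a one-dimensional problem. Writing $u_0 = \ip{\vw^0,\vx}$ and $v_0 = \ip{\vw^0,\vz}$, set $\psi(t) := \theta(u_0 + t x_k)(v_0 + t z_k)$, so that the $k$-th summand equals $\E[\psi(a_k) - \psi(g_k)]$. Taylor-expanding $\psi$ to third order around $t=0$ and taking expectations, the zeroth-order term is identical for both variables, while the first- and second-order terms match because $a_k$ and $g_k$ share the same mean ($0$) and variance ($1$); crucially this cancellation uses only the first two moments and never the third, so no symmetry of $a$ is needed here. What survives is the third-order Taylor remainder, and---this is the point that aligns with the hypotheses---that remainder depends on $\theta$ only through $\theta''$ and $\theta'''$, so bounds on $\theta$ and $\theta'$ are never invoked.

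I would then bound the remainder. Since the factor $(v_0 + t z_k)$ is linear in $t$, a direct computation gives
\[
\psi'''(t) = x_k^3\,\theta'''(u_0 + t x_k)(v_0 + t z_k) + 3 x_k^2\,\theta''(u_0 + t x_k)\, z_k,
\]
whence $\sup_{|\xi|\le|t|}|\psi'''(\xi)| \le \tau_3 |x_k|^3(|v_0| + |t||z_k|) + 3\tau_2 |x_k|^2 |z_k|$. The one awkward feature is the unbounded factor $|v_0|$, and this is where the independence does the work: because $v_0$ is independent of the replaced coordinate, the expectation factors, and $\E|v_0| \le \twonorm{\vz} \le 1$. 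Using $\E|a_k|^3 \le (\E a^4)^{3/4} \le \E a^4$ (valid since $\E a^4 \ge (\E a^2)^2 = 1$) together with $\E a_k^4 = \E a^4$, the $k$-th summand is bounded by $C(\tau_2+\tau_3)\,\E a^4\,(|x_k|^3 + |x_k|^2|z_k|)$; the Gaussian contribution is controlled in the same way with absolute-constant moments, which are absorbed into the factor $\E a^4 \ge 1$.

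Finally I would sum over $k$ and extract the sup-norm via the elementary bounds $\sum_k |x_k|^3 \le \inftynorm{\vx}\twonorm{\vx}^2 \le \inftynorm{\vx}$ and, by Cauchy--Schwarz, $\sum_k x_k^2|z_k| \le \inftynorm{\vx}\twonorm{\vx}\twonorm{\vz} \le \inftynorm{\vx}$, which collapse the telescoping sum to $C(\tau_2+\tau_3)\,\E a^4\, \inftynorm{\vx}$, as claimed. The main obstacle is precisely the product structure: unlike a plain Berry--Esseen comparison of $\E\theta(\ip{\va,\vx})$, the extra linear factor $\ip{\vw,\vz}$ introduces the unbounded term $v_0$ into the third derivative, and the whole argument hinges on exploiting the independence of the common part from the replaced coordinate (together with $\twonorm{\vz}\le 1$) to keep this factor in check.
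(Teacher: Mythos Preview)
Your proposal is correct and follows essentially the same Lindeberg replacement argument as the paper's proof. The only cosmetic difference is that you parameterize the swap by the scalar $t\mapsto\psi(t)=\theta(u_0+tx_k)(v_0+tz_k)$, whereas the paper writes the identical computation as a bivariate Taylor expansion of $\phi(x,z)=\theta(x)z$ in multi-index notation; the third-order remainder terms, the use of independence to control $\E|v_0|\le 1$, and the final summation via $\sum_k(|x_k|^3+x_k^2|z_k|)\le 2\inftynorm{\vx}$ match exactly.
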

The proof is based on a Lindeberg replacement argument in two variables; it is provided in the appendix. 
Note that the quality of approximation in this theorem is the same for all $\vz \in B_2^n$; this will be crucial for our argument. 

We apply the lemma to prove Theorem~\ref{thm1}.
\begin{proof}[Proof of Theorem~\ref{thm1}.] 
We set $\alpha$ to be the right-hand side of Equation \eqref{eq:berry esseen}.  Further, by definition, $\E a^4 \leq 16 \kappa^4$.  Thus, we may apply Lemma~\ref{lem:general structure} with 
\[\alpha \leq C \kappa^4 (\tau_2 + \tau_3) \| \vx \|_\infty\]
to complete the proof. \qedhere

\end{proof}

\section{Proof of Theorem~\ref{thm2}}		\label{thm2p}
For simplicity, we first assume the noiseless model in \eqref{eq:one-bit}; we will then describe a minor adjustment to the proof to generalize to the random bit flip model in \eqref{eq:bit-flip model}.  

The essential difference from the proof of Theorem \ref{thm1} is that $\theta(t) = \sign(t)$ is not differentiable.   One approach would be to approximate $\theta$ by a smooth function and apply the Berry-Esseen type central limit theorem (Lemma~\ref{fourth}).  However, we achieve a tighter bound with a different approach.
Once again, we only need to bound $\alpha$ in Equation \eqref{eq:bound expectation}.  This bound is contained in the following proposition.  In this section, $\lambda =\E \abs{g} = \sqrt{2/\pi}$.

\begin{proposition}[Expectation]\label{expectation2}
Consider $\vx \in S^{n-1}, \vx' \in B_2^n$. If $\|\vx\|_\infty \leq c/\E|a|^3,$ then
\begin{equation}
\label{eq:expectation2 bound}
|\E f_{\vx}(\vx') - \lambda \ip{ \vx, \vx' }|\leq C\E|a|^3\|\vx\|_\infty^{1/2}.
\end{equation}
\end{proposition}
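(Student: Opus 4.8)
The plan is to start from the identity $\E f_{\vx}(\vx')=\E\,\sign(\ip{\va,\vx})\,\ip{\va,\vx'}$ of \eqref{expectation fx} and compare this sub-gaussian expectation with its Gaussian value $\lambda\ip{\vx,\vx'}$. Writing $u:=\ip{\va,\vx}$ and decomposing $\vx'$ along $\vx$ via $\vx'_\perp:=\vx'-\ip{\vx,\vx'}\vx$, we have $\ip{\va,\vx'}=\ip{\vx,\vx'}\,u+w$ with $w:=\ip{\va,\vx'_\perp}$ and $\vx'_\perp\perp\vx$; since $\sign(u)u=\abs{u}$, this gives
\[
\E f_{\vx}(\vx')=\ip{\vx,\vx'}\,\E\abs{u}+\E\,\sign(u)\,w .
\]
In the Gaussian model the first factor is $\E\abs{g}=\lambda$ and the cross term vanishes by independence of $\ip{\vg,\vx}$ and $\ip{\vg,\vx'_\perp}$, so it suffices to bound $\abs{\E\abs{u}-\lambda}$ and $\abs{\E\,\sign(u)w}$ by $\E\abs{a}^3\inftynorm{\vx}^{1/2}$. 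The first is immediate: $u$ is a normalized sum of independent symmetric variables of variance $\twonorm{\vx}^2=1$, and since $t\mapsto\abs{t}$ is $1$-Lipschitz, a one-dimensional Berry--Esseen (Wasserstein) bound gives $\abs{\E\abs{u}-\lambda}\lesssim\E\abs{a}^3\sum_i\abs{x_i}^3\le\E\abs{a}^3\inftynorm{\vx}$, well within the target.

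The cross term is the heart of the matter. I would fix a smooth odd approximation $\theta_\delta$ of $\sign$ that agrees with $\sign$ off $[-\delta,\delta]$ and obeys $\inftynorm{\theta_\delta^{(k)}}\lesssim\delta^{-k}$, and split $\E\,\sign(u)w=\E\,\theta_\delta(u)\,w+\E\,(\sign-\theta_\delta)(u)\,w$. For the smooth part, write $\E\,\theta_\delta(u)\,w=\sum_j x'_{\perp,j}\,\E[a_j\,\theta_\delta(u)]$ and Taylor-expand each summand in the $j$-th coordinate about $u_j:=u-x_ja_j$. Here the \emph{symmetry} of $a$ is decisive: all odd powers integrate to zero ($\E a=\E a^3=0$), so the $\delta^{-2}$-sized second-order term disappears and $\E[a_j\theta_\delta(u)]=x_j\,\E\theta_\delta'(u_j)+O\!\big(\E a^4\,\delta^{-3}\,\abs{x_j}^3\big)$. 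The leading piece $\sum_j x'_{\perp,j}x_j\,\E\theta_\delta'(u_j)$ would be exactly $\E\theta_\delta'(u)\cdot\ip{\vx'_\perp,\vx}=0$ were $\E\theta_\delta'(u_j)$ independent of $j$, and its $j$-dependence is again second order in $x_j$ by the same cancellation; so this piece is $O(\delta^{-3}\inftynorm{\vx}^2)$. Using $\sum_j\abs{x'_{\perp,j}}\,\abs{x_j}^3\le\inftynorm{\vx}^2$, the smooth part is $\lesssim\E a^4\,\delta^{-3}\,\inftynorm{\vx}^2$. This improves on the direct use of Lemma~\ref{fourth} (which would give $(\delta^{-2}+\delta^{-3})\inftynorm{\vx}$) by an extra factor $\inftynorm{\vx}$, thanks to the orthogonality $\ip{\vx'_\perp,\vx}=0$ together with the symmetry of $a$.

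For the near part I would use $\abs{\E(\sign-\theta_\delta)(u)\,w}\le 2\,\E[\abs{w}\,\one_{\abs{u}\le\delta}]$ and Cauchy--Schwarz,
\[
\E[\abs{w}\,\one_{\abs{u}\le\delta}]\le\big(\E[w^2\one_{\abs{u}\le\delta}]\big)^{1/2}\big(P(\abs{u}\le\delta)\big)^{1/2}.
\]
Anti-concentration of $u$ (one-dimensional Berry--Esseen) gives $P(\abs{u}\le\delta)\lesssim\delta+\E\abs{a}^3\inftynorm{\vx}$, and---this is the key point---one shows the companion estimate $\E[w^2\one_{\abs{u}\le\delta}]\lesssim\delta+\E\abs{a}^3\inftynorm{\vx}$, i.e. the near part is \emph{linear} in $\delta$, mirroring the Gaussian identity $\E w^2\cdot P(\abs{g}\le\delta)\lesssim\delta$. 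Granting this, the near part is $\lesssim\delta+\E\abs{a}^3\inftynorm{\vx}$, and choosing $\delta\sim\inftynorm{\vx}^{1/2}$ balances it against the smooth part $\delta^{-3}\inftynorm{\vx}^2$ to produce the rate $\inftynorm{\vx}^{1/2}$. The hypothesis $\inftynorm{\vx}\le c/\E\abs{a}^3$ guarantees that the correction $\E\abs{a}^3\inftynorm{\vx}\le\E\abs{a}^3\inftynorm{\vx}^{1/2}$ is absorbed, and a routine comparison of $\E a^4$ with $\E\abs{a}^3$ (both controlled by $\k$) casts the constant in the stated form.

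The main obstacle is exactly the linear-in-$\delta$ second-moment estimate $\E[w^2\one_{\abs{u}\le\delta}]\lesssim\delta$. It does not reduce to one-dimensional anti-concentration: expanding $w^2=\sum_{j,k}x'_{\perp,j}x'_{\perp,k}\,a_ja_k$, the diagonal terms collapse to $P(\abs{u}\le\delta)$, but the off-diagonal terms couple two linear forms to the event $\{\abs{u}\le\delta\}$ and must be handled by a genuinely two-dimensional (weighted) Gaussian comparison, in which they vanish. If one settles instead for the crude bound $\E[\abs{w}\one_{\abs{u}\le\delta}]\lesssim(P(\abs{u}\le\delta))^{1/2}\sim\delta^{1/2}$, the same optimization yields only the weaker exponent $\inftynorm{\vx}^{2/7}$; securing the stated $\inftynorm{\vx}^{1/2}$ rests on this estimate.
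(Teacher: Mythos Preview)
Your proposal has a genuine gap at exactly the point you flag: the estimate $\E[w^2\one_{|u|\le\delta}]\lesssim\delta+\E|a|^3\inftynorm{\vx}$ is asserted but not proved, and as you yourself note, without it your optimization yields only $\inftynorm{\vx}^{2/7}$. This estimate is not a routine consequence of one-dimensional Berry--Esseen. Expanding $w^2=\sum_{j,k}x'_{\perp,j}x'_{\perp,k}a_ja_k$, the diagonal terms are fine, but the off-diagonal terms $\E[a_ja_k\one_{|u|\le\delta}]$ do not vanish by symmetry (flipping the sign of $a_j$ changes $u$), and a term-by-term Lindeberg replacement accumulates the Berry--Esseen error over $\onenorm{\vx'_\perp}^2$ pairs, which can be of order $n$. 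A workable route would require a bivariate Lindeberg argument for the non-smooth, quadratically growing functional $(u,w)\mapsto w^2\one_{|u|\le\delta}$---comparable in effort to the original problem. So the proof is incomplete at its decisive step.

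The paper avoids the smoothing and near/far split entirely by a short geometric argument. Set $\vv_x:=\E\sign(\ip{\va,\vx})\,\va$, so that $\E f_{\vx}(\vx')=\ip{\vv_x,\vx'}$; then it suffices to bound $\twonorm{\vv_x-\lambda\vx}^2=(\twonorm{\vv_x}^2-\lambda^2)+2\lambda(\lambda-\ip{\vv_x,\vx})$. The second term is your first-moment Berry--Esseen bound $|\ip{\vv_x,\vx}-\lambda|\lesssim\E|a|^3\inftynorm{\vx}$. For the first, the paper bootstraps: with $\vz:=\vv_x/\twonorm{\vv_x}$,
\[
\twonorm{\vv_x}=\E\sign(\ip{\va,\vx})\ip{\va,\vz}\le\E|\ip{\va,\vz}|=\ip{\vv_\vz,\vz}\le\lambda+C\E|a|^3\inftynorm{\vz},
\]
so the whole question reduces to showing $\inftynorm{\vz}\lesssim\E|a|^3\inftynorm{\vx}$, i.e.\ a coordinatewise bound on $\vv_x$. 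That is Lemma~\ref{second2}, proved directly from the Kolmogorov-form Berry--Esseen inequality \eqref{eq:strk inf norm}. Everything stays one-dimensional, and the cross term $\E\sign(u)w$ never needs to be isolated.
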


We pause to prove Theorem~\ref{thm2} based on the proposition.  
\begin{proof}[Proof of Theorem~\ref{thm2}]
By definition of $\kappa$, we have $\E\abs{a}^3 \leq 3^{3/2} \kappa^3$.  Now suppose that $\inftynorm{\vx} \geq c/\E \abs{a}^3$.  Thus, $\inftynorm{\vx} \geq c'/\kappa^3$.  Then the right-hand side of Equation \eqref{eq:no noise thm bound} in Theorem \ref{thm2} is lower bounded by
\[C\cdot c' \kappa^{3/2} \geq C \cdot c'.\]
Take $C \geq 4/c'$ in which case the theorem trivially holds since $\hat{\vx}, \vx \in B_2^n$.  

On the other hand, if $\inftynorm{\vx} \geq c/\E \abs{a}^3$ we can apply Proposition~\ref{expectation2}.  
We apply the proposition to bound $\alpha$ in Lemma~\ref{lem:general structure}.  This gives 
\[\alpha = C\kappa^3\|\vx\|_\infty^{1/2}\]
and completes the proof of the theorem in the noiseless case.

In the random bit flip model of Equation \eqref{eq:bit-flip model}, $\E f_{\vx}(\vx')$ is scaled by a factor of $\E \eps_i = 2 - p$.  This has the effective of scaling $\lambda$ and the right-hand side of Equation \eqref{eq:expectation2 bound} by a factor of $2-p$.  Thus, we complete the proof by applying Lemma \ref{lem:general structure} with
\[\alpha = C (2 - p) \kappa^3\|\vx\|_\infty^{1/2} \quad \text{and} \quad \lambda = \sqrt{2/\pi}(2 - p).\]
\end{proof}

In order to prove Proposition~\ref{expectation2}, we will need to use two different known one-dimensional 
Berry-Esseen results (see \cite[Theorems 2.1.24 and 2.1.30]{Strk}), stated below in simplified form for the convenience of the reader:

\begin{theorem}[One-dimensional Berry-Esseen central limit theorem]
Let $\vz$ be a random vector with $n$ independent, mean-zero, entries satisfying $\E \twonorm{\vz}^2 = 1$.  Set
\[S_n = \sum_{i = 1}^n z_i \quad \text{and} \quad \beta^3 := \E \norm{\vz}^3_3 := \sum_{i =1}^n \E \abs{z_i}^3\]
and let $g$ be a standard normal random variable.
We have
\begin{equation}
\label{eq:strk one norm}
\int_{-\infty}^\infty \abs{P( S_n \leq t) - P(g \leq t)} dt \leq 9 \beta^3
\end{equation}
and
\begin{equation}
\label{eq:strk inf norm}
\sup_{t} \abs{P(S_n \leq t) - P(g \leq t)} \leq 10 \beta^3.
\end{equation}
\end{theorem}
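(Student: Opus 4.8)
The plan is to establish the two inequalities by separate classical routes, since each measures the law of $S_n$ in a different metric: \eqref{eq:strk inf norm} is a bound on the Kolmogorov distance between the distribution function $F$ of $S_n$ and the standard normal distribution function $\Phi$, while \eqref{eq:strk one norm} is a bound on the $L^1$-Wasserstein distance (recall that for real laws with distribution functions $F_\mu, F_\nu$ one has $W_1(\mu,\nu)=\int_{\R}\abs{F_\mu(t)-F_\nu(t)}\,dt$). Throughout I write $\sigma_j^2 := \E z_j^2$, so that $\sum_{j=1}^n\sigma_j^2 = \E\twonorm{\vz}^2 = 1$, and $\phi_j(t) := \E e^{it z_j}$ for the characteristic function of $z_j$; note also $\beta^3 \ge \sum_j\sigma_j^3$ since $\sigma_j^3 \le \E\abs{z_j}^3$.

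For the sup bound \eqref{eq:strk inf norm} I would use the Fourier-analytic approach through Esseen's smoothing inequality, which for any $T>0$ gives
\[
\sup_t \abs{F(t) - \Phi(t)} \le \frac{1}{\pi}\int_{-T}^{T}\abs{\frac{\prod_{j}\phi_j(t) - e^{-t^2/2}}{t}}\,dt + \frac{24}{\pi\sqrt{2\pi}\,T}.
\]
The first step is a pointwise comparison of characteristic functions. Since each $z_j$ is mean-zero, Taylor expansion yields $\phi_j(t) = 1 - \sigma_j^2 t^2/2 + r_j(t)$ with $\abs{r_j(t)} \le \abs{t}^3\E\abs{z_j}^3/6$, and hence $\abs{\phi_j(t) - e^{-\sigma_j^2 t^2/2}} \lesssim \abs{t}^3\E\abs{z_j}^3$. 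Telescoping the product via $\abs{\prod_j a_j - \prod_j b_j} \le \sum_j\abs{a_j-b_j}\prod_{k<j}\abs{a_k}\prod_{k>j}\abs{b_k}$, and bounding each $\abs{\phi_j(t)}$ by a Gaussian factor $e^{-c\sigma_j^2 t^2}$ on the window $\abs{t}\lesssim 1/\beta$, produces $\abs{\prod_j\phi_j(t) - e^{-t^2/2}} \lesssim \abs{t}^3\beta^3 e^{-t^2/4}$ there. Substituting this into the smoothing inequality, integrating (the $t^2 e^{-t^2/4}$ tail is an absolute constant), and choosing $T \sim 1/\beta^3$ to absorb the second term then delivers $C\beta^3$.

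For the $L^1$ bound \eqref{eq:strk one norm} I would instead use Stein's method, which is cleaner for the Wasserstein distance and handles the non-identically-distributed summands transparently. For a $1$-Lipschitz test function $h$, let $f$ solve the Stein equation $f'(x) - x f(x) = h(x) - \E h(g)$; the standard solution estimates give $\inftynorm{f''} \le 2\inftynorm{h'} \le 2$. Then $\E h(S_n) - \E h(g) = \E[f'(S_n) - S_n f(S_n)]$, and writing $S_n f(S_n) = \sum_j z_j f(S_n)$ and expanding $f$ and $f'$ around the leave-one-out sum $S_n - z_j$ (using independence, $\E z_j = 0$, and $\sum_j\sigma_j^2 = 1$) collapses the principal terms and leaves a remainder controlled by $\inftynorm{f''}\sum_j\E\abs{z_j}^3$. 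Taking the supremum over $1$-Lipschitz $h$ and invoking the dual formula for $W_1$ then yields $\int_{\R}\abs{F(t)-\Phi(t)}\,dt \lesssim \beta^3$, with a constant that comes out comfortably below the stated $9$.

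The main obstacle lies in the Fourier route and is quantitative rather than structural: one must control $\prod_j\phi_j(t)$ over the entire integration window $\abs{t}\le T$ with $T\sim 1/\beta^3$, whereas the per-factor Taylor estimate is only valid up to $\abs{t}\sim 1/\beta$. Beyond that, in the intermediate and high frequencies, one must show $\prod_j\abs{\phi_j(t)}$ stays exponentially small; this is the delicate part of the classical argument, and genuinely so for lattice laws such as the Bernoulli case of interest here, where an individual $\abs{\phi_j(t)}$ returns to $1$ periodically, so that the product need not remain small globally. It is resolved by taking the constant in $T$ small enough that the window stays below the first such recurrence, and this is where the explicit constant $10$ is absorbed. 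The Stein route has only the milder, routine obstacle of pinning down the solution bounds and Taylor remainders sharply enough; and since this theorem is merely quoted from \cite{Strk} for later use, establishing both inequalities up to absolute constants is all that our application requires.
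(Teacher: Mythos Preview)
The paper does not prove this theorem at all: it is quoted as a known result, with the explicit citation ``see \cite[Theorems 2.1.24 and 2.1.30]{Strk}, stated below in simplified form for the convenience of the reader.'' So there is no proof in the paper to compare your proposal against. Your own closing remark already acknowledges this (``this theorem is merely quoted from \cite{Strk} for later use''), and indeed the correct treatment here is simply to cite the reference rather than to supply an argument.

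That said, your sketch is a reasonable outline of how the two classical proofs go---Esseen's smoothing inequality for the Kolmogorov bound and Stein's method for the $L^1$/Wasserstein bound---and you correctly flag the genuine subtlety in the Fourier route (controlling $\prod_j|\phi_j(t)|$ beyond the Taylor window, especially for lattice laws). One quibble: in the smoothing step you write ``choosing $T\sim 1/\beta^3$,'' but the standard argument takes $T\sim 1/\beta$ (a constant times the Lyapunov ratio), which is exactly the scale at which the per-factor Taylor bound remains valid and the product stays uniformly damped; at $T\sim 1/\beta^3$ the intermediate-frequency issue you describe would be much worse. With $T=c/\beta$ both the integral term and the $24/(\pi\sqrt{2\pi}\,T)$ term are $O(\beta)$, and since $\beta\ge\beta^3$ need not hold in general, getting the stated $10\beta^3$ requires the sharper treatment in the cited reference rather than this rough sketch. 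For the purposes of the present paper, however, none of this matters: the result is used as a black box, and your proposal should simply defer to \cite{Strk}.
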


We now prove Proposition~\ref{expectation2} using a simple geometric argument.  Define the vector
\[\vv_x:= \E \sign( \< \va, \vx \> ) \va\]
and note that the conclusion of Proposition~\ref{expectation2} states that
\[\vv_x \approx \lambda \vx.\]
In order to prove that the above approximate equality holds, we will derive it from two \textit{scalar} inequalities:
\begin{equation}
\label{eq:approx ident}
\< \vv_x, \vx \> \approx \lambda \quad \text{and} \quad \twonorm{\vv_x} \lesssim \lambda.
\end{equation}
(Indeed, the first of these approximate identities states that $\vv_x$ is near a hyperplane with normal $\vx$, and the second one states that $\vv_x$ is nearly in the ball which intersects that hyperplane at the point $\lambda \vx$.)  
Thus we reduce the problem to proving \eqref{eq:approx ident}.

\begin{lemma}\label{first2}
$\abs{\ip{ \vv_x, \vx } - \lambda} \leq C\E|a|^3\|\vx\|_3^3 \leq C \E \abs{a}^3 \inftynorm{\vx}.$
\end{lemma}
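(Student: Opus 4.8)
\textbf{Proof proposal for Lemma~\ref{first2}.}

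The plan is to compute $\ip{\vv_x,\vx}$ directly, reduce it to a one-dimensional quantity governed by the sum $S_n = \ip{\va,\vx}$, and then compare against the Gaussian case using the $L^1$-type Berry-Esseen bound \eqref{eq:strk one norm}. First I would observe that
\[
\ip{\vv_x,\vx} = \E \sign(\ip{\va,\vx}) \ip{\va,\vx} = \E \abs{\ip{\va,\vx}}.
\]
Thus the inner product is simply the expected absolute value of the scalar random variable $S_n := \ip{\va,\vx} = \sum_{i=1}^n x_i a_i$. On the Gaussian side, since $\twonorm{\vx}=1$ we have $\ip{\vg,\vx} \sim N(0,1)$, so $\E\abs{\ip{\vg,\vx}} = \E\abs{g} = \sqrt{2/\pi} = \lambda$. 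Therefore the lemma reduces to showing that $\abs{\E\abs{S_n} - \E\abs{g}}$ is small.

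The key step is to express the expected absolute value as an integral of tail probabilities so that the Berry-Esseen bound applies. I would use the layer-cake identity
\[
\E\abs{S_n} = \int_0^\infty \parenth{P(S_n > t) + P(S_n < -t)}\, dt,
\]
and the analogous identity for $g$. Subtracting and taking absolute values gives
\[
\abs{\E\abs{S_n} - \E\abs{g}} \leq \int_0^\infty \abs{P(S_n > t) - P(g > t)}\, dt + \int_0^\infty \abs{P(S_n < -t) - P(g < -t)}\, dt,
\]
and after the change of variables $t \mapsto -t$ in the second integral, both pieces are dominated by $\int_{-\infty}^\infty \abs{P(S_n \leq t) - P(g \leq t)}\, dt$. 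Now $S_n$ is exactly a sum of independent mean-zero entries $z_i := x_i a_i$ with $\E\twonorm{\vz}^2 = \sum x_i^2 \E a^2 = \twonorm{\vx}^2 = 1$, so the hypotheses of the one-dimensional Berry-Esseen theorem are met. Applying \eqref{eq:strk one norm} yields
\[
\abs{\E\abs{S_n} - \E\abs{g}} \leq 9\beta^3, \qquad \beta^3 = \sum_{i=1}^n \E\abs{x_i a_i}^3 = \E\abs{a}^3 \sum_{i=1}^n \abs{x_i}^3 = \E\abs{a}^3 \norm{\vx}_3^3,
\]
which is precisely the first claimed bound up to the absolute constant. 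The final inequality $\norm{\vx}_3^3 \leq \inftynorm{\vx} \norm{\vx}_2^2 = \inftynorm{\vx}$ follows by bounding $\abs{x_i}^3 = \abs{x_i}\cdot x_i^2 \leq \inftynorm{\vx}\, x_i^2$ and summing, using $\twonorm{\vx}=1$.

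I do not anticipate a serious obstacle here, since the argument is a clean reduction to the quoted one-dimensional result; the only points requiring care are the identification $\sign(t)\cdot t = \abs{t}$, the correct symmetrization of the layer-cake integrals so that the two-sided $L^1$ Berry-Esseen bound \eqref{eq:strk one norm} can be invoked (rather than the weaker sup-norm bound \eqref{eq:strk inf norm}), and verifying that the variance normalization $\E\twonorm{\vz}^2 = 1$ holds so that the theorem applies verbatim. The choice of the $L^1$ version of Berry-Esseen is what makes this bound tight, and it is essential for controlling $\ip{\vv_x,\vx}$; the sup-norm version would not yield an integrable estimate.
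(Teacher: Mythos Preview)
Your proposal is correct and follows essentially the same route as the paper: identify $\ip{\vv_x,\vx}=\E|\ip{\va,\vx}|$, write the difference from $\lambda=\E|g|$ via the layer-cake formula, and invoke the $L^1$ Berry-Esseen bound \eqref{eq:strk one norm} with $z_i=x_i a_i$ to obtain $C\,\E|a|^3\|\vx\|_3^3$, followed by $\|\vx\|_3^3\le\|\vx\|_\infty$. The only cosmetic difference is that the paper uses the symmetry of $a$ to write the integrand as $2\bigl|\int_0^\infty(P(\ip{\va,\vx}\ge t)-P(g\ge t))\,dt\bigr|$ before applying Berry-Esseen, whereas you split the two tails and recombine them into $\int_{-\infty}^\infty|\cdot|\,dt$; both lead to the same estimate.
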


\begin{proof}
Recall that by definition of $\vv_{x}$,
$$\ip{ \vv_x, \vx } = \E\sign(\ip{ \va, \vx }) \ip{ \va, \vx }= \E \abs{\< \va, \vx\> }.$$
Note that $\lambda = \sqrt{2/\pi} = \E \abs{g}$ and thus, to prove the lemma, we wish to bound the difference $\big \vert \E \abs{\< \va , \vx \>} - \E \abs{g} \big \vert$.  We have
\[
\big \vert\E \abs{ \< \va, \vx \>} - \E \abs{g}\big \vert  = \abs{\int_0^\infty P( \abs{ \< \va, \vx \>} \geq t ) - P( \abs{g} \geq t) dt}= 2 \abs {\int_0^\infty P(  \< \va, \vx \> \geq t ) - P( g \geq t) dt}.
\]
To bound the right-hand side, we apply the Berry-Esseen result in Equation \eqref{eq:strk one norm} which bounds the above quantity by
$$ C\sum_{i=1}^{n} \E|x_i a_i|^3 = C\E|a|^3\|\vx\|_3^3.$$
\qedhere

\end{proof}

To bound $\twonorm{\vv_x}$ we will apply the Berry-Esseen Theorem with $\vz = \vv_x/\twonorm{\vv_x}$.  This will require first a rough two-sided bound on $\twonorm{\vv_x}$ and also an upper bound on $\inftynorm{\vv_x}$.  We establish these in the following two lemmas.

\begin{lemma}  \label{zeroth} Suppose that $\inftynorm{\vx} \leq c/\E \abs{a}^3$.  Then $\frac{1}{2} \leq \twonorm{\vv_x} \leq 1$.
\end{lemma}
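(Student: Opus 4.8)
The plan is to establish the two-sided bound on $\twonorm{\vv_x}$ by relating it to the quantity $\ip{\vv_x,\vx}$, which Lemma~\ref{first2} already controls. First I would observe that by the Cauchy--Schwarz inequality and the fact that $\vx \in S^{n-1}$,
\[
\twonorm{\vv_x} \geq \ip{\vv_x, \vx} = \E\abs{\ip{\va,\vx}}.
\]
By Lemma~\ref{first2}, the right-hand side is at least $\lambda - C\E\abs{a}^3\inftynorm{\vx}$. Under the hypothesis $\inftynorm{\vx} \leq c/\E\abs{a}^3$ with $c$ chosen small enough, this difference is at least $\lambda/2 = \sqrt{2/\pi}/2 > 1/2$. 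This immediately gives the lower bound $\twonorm{\vv_x} \geq 1/2$.

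For the upper bound, the idea is to bound $\twonorm{\vv_x}^2 = \ip{\vv_x,\vv_x}$ by exploiting the definition of $\vv_x$. Writing $\twonorm{\vv_x} = \ip{\vv_x, \vu}$ where $\vu = \vv_x/\twonorm{\vv_x}$ is a unit vector, I would use the definition to get
\[
\twonorm{\vv_x} = \ip{\E\sign(\ip{\va,\vx})\va,\ \vu} = \E\sign(\ip{\va,\vx})\ip{\va,\vu} \leq \E\abs{\ip{\va,\vu}}.
\]
Now the task reduces to bounding $\E\abs{\ip{\va,\vu}}$ for the unit vector $\vu$. Applying the one-dimensional Berry--Esseen estimate \eqref{eq:strk one norm} to $\ip{\va,\vu}$ (exactly as in the proof of Lemma~\ref{first2}, comparing against a standard Gaussian) yields
\[
\E\abs{\ip{\va,\vu}} \leq \E\abs{g} + C\E\abs{a}^3\threenorm{\vu}^3 \leq \lambda + C\E\abs{a}^3\inftynorm{\vu},
\]
using $\norm{\vu}_3^3 \leq \inftynorm{\vu}\twonorm{\vu}^2 = \inftynorm{\vu}$.

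The main obstacle is that I do not a priori control $\inftynorm{\vu} = \inftynorm{\vv_x}/\twonorm{\vv_x}$; the small-$\ell^\infty$ hypothesis is on $\vx$, not on the direction $\vu$ of $\vv_x$. This is presumably why the excerpt flags that an upper bound on $\inftynorm{\vv_x}$ is established separately in a subsequent lemma before the Berry--Esseen step is applied to $\vz = \vv_x/\twonorm{\vv_x}$. To close the argument cleanly within Lemma~\ref{zeroth} itself, I would instead use the cruder but sufficient bound: since $\lambda + C\E\abs{a}^3\inftynorm{\vu} \leq \lambda + C\E\abs{a}^3$ and $\lambda = \sqrt{2/\pi} < 1$, one can arrange constants so that $\twonorm{\vv_x} \leq 1$. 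More carefully, combining the displayed inequality with the lower bound $\twonorm{\vv_x} \geq 1/2$ gives $\inftynorm{\vu} \leq 2\inftynorm{\vv_x}$, and a bootstrap using the forthcoming bound on $\inftynorm{\vv_x}$ (which will itself be controlled by $\inftynorm{\vx}$ via Berry--Esseen) ensures the correction term $C\E\abs{a}^3\inftynorm{\vu}$ is at most $1-\lambda$ under the standing hypothesis $\inftynorm{\vx}\leq c/\E\abs{a}^3$. Thus $\twonorm{\vv_x} \leq 1$, completing the two-sided estimate.
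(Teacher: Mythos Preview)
Your lower bound is exactly the paper's argument. The difficulty is in your upper bound.

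The paper's upper bound is a one-liner that avoids Berry--Esseen entirely: starting as you do with
\[
\twonorm{\vv_x}^2 = \E\sign(\ip{\va,\vx})\ip{\va,\vv_x} \leq \E\abs{\ip{\va,\vv_x}},
\]
the paper simply applies Cauchy--Schwarz (Jensen) and the fact that the coordinates of $\va$ are independent, mean zero, \emph{unit variance}:
\[
\E\abs{\ip{\va,\vv_x}} \leq \bigl(\E\ip{\va,\vv_x}^2\bigr)^{1/2} = \twonorm{\vv_x}.
\]
Hence $\twonorm{\vv_x}^2 \leq \twonorm{\vv_x}$, so $\twonorm{\vv_x}\leq 1$. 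No information about $\inftynorm{\vv_x}$ or $\inftynorm{\vx}$ is needed here at all.

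Your two proposed routes for the upper bound both have gaps. The ``cruder'' bound $\lambda + C\E\abs{a}^3\inftynorm{\vu} \leq \lambda + C\E\abs{a}^3$ cannot be forced below $1$: since $\E a^2 = 1$ forces $\E\abs{a}^3 \geq 1$, and the Berry--Esseen constant $C$ is not at your disposal, the right-hand side exceeds $1$. The bootstrap via Lemma~\ref{second2} does not close either: that lemma gives $\inftynorm{\vv_x} \leq C\E\abs{a}^3\inftynorm{\vx} \leq Cc$, so $\inftynorm{\vu} \leq 2Cc$, and the Berry--Esseen correction becomes $C'\E\abs{a}^3\cdot 2Cc = 2CC'c\,\E\abs{a}^3$. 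This still carries a factor of $\E\abs{a}^3$, which the hypothesis $\inftynorm{\vx}\leq c/\E\abs{a}^3$ does not bound; no choice of the absolute constant $c$ makes this uniformly at most $1-\lambda$. The moral is that the upper bound here is not a Berry--Esseen fact but a second-moment fact.
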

\begin{proof}
For the lower bound, using Lemma~\ref{first2}, we have
\[\twonorm{\vv_x} = \twonorm{\vv_x} \twonorm{\vx} \geq \abs{\< \vv_x, \vx \> } \geq \lambda - C \E \abs{a}^3 \inftynorm{\vx}.\]
Since $\lambda = \sqrt{2/\pi}$, and $\inftynorm{\vx} \leq c/\E \abs{a}^3$, the right-hand side is greater than $1/2$, as long as we take $c \leq (\sqrt{2/\pi} - 1/2)/C$.

In the other direction, we have
\begin{equation}
\label{eq:zeroth}
\twonorm{\vv_x}^2 = \< \vv_x, \vv_x \> = \E \sign( \< \va, \vx \> ) \< \va, \vv_x \> \leq \E \abs{ \< \va, \vv_x \> } \leq (\E \< \va, \vv_x \> ^2)^{1/2} = \twonorm{\vv_x}.
\end{equation}
It follows that $\twonorm{\vv_x} \leq 1$.
\end{proof}

\begin{lemma}\label{second2}  Suppose that $\inftynorm{\vx} \leq c/ \E\abs{a}^3$.  Then,
$\|\vv_x\|_\infty \leq C\E|a|^3\|\vx\|_\infty.$
\end{lemma}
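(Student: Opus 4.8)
The plan is to bound each coordinate $(\vv_x)_j = \ip{\vv_x, \ve_j} = \E\,\sign(\ip{\va,\vx})\,a_j$ separately and take the maximum, since $\inftynorm{\vv_x} = \max_j |(\vv_x)_j|$. Fix $j$ and isolate the $j$-th coordinate by writing $\ip{\va,\vx} = a_j x_j + S_j$, where $S_j := \sum_{i\neq j} a_i x_i$ is independent of $a_j$ and has variance $\sigma_j^2 = \sum_{i\neq j} x_i^2 = 1 - x_j^2$. Conditioning on $a_j$ gives
\[
(\vv_x)_j = \E\big[a_j\, h(a_j)\big], \qquad h(t) := \E_{S_j}\,\sign(t x_j + S_j) = 2\,P(S_j \ge -t x_j) - 1 .
\]
Because $a$ is mean-zero, the constant part of $h$ integrates to $0$, so only the variation of $h$ in $t$ contributes. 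This mirrors the proof of Lemma~\ref{first2}, but carried out coordinate-wise with a leave-one-out sum.

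Next I would introduce the Gaussian surrogate $\tilde h(t) := 2\Phi(t x_j/\sigma_j) - 1$, where $\Phi$ is the standard normal CDF; this is exactly $h$ with $S_j$ replaced by a centered Gaussian of the same variance. Writing both functions through the complementary CDF of $S_j/\sigma_j$ and of a standard normal, the one-dimensional Berry--Esseen bound \eqref{eq:strk inf norm} applied to $S_j/\sigma_j$ yields the uniform estimate $\sup_t |h(t) - \tilde h(t)| \le C\beta_j^3$, where $\beta_j^3 = \sum_{i\neq j}\E|x_i a_i/\sigma_j|^3$. Using the hypothesis $\inftynorm{\vx} \le c/\E|a|^3 \le c$ together with $\E|a|^3 \ge (\E a^2)^{3/2} = 1$, one gets $\sigma_j^2 = 1 - x_j^2 \ge 1/2$, hence $\beta_j^3 \le C\,\E|a|^3\,\norm{\vx}_3^3 \le C\,\E|a|^3\,\inftynorm{\vx}$ (the last step from $\norm{\vx}_3^3 \le \inftynorm{\vx}\,\twonorm{\vx}^2 = \inftynorm{\vx}$). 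Integrating against $a_j$ and using $\E|a_j|\le 1$ then gives, with $\tilde v_j := \E[a_j\,\tilde h(a_j)]$,
\[
\big| (\vv_x)_j - \tilde v_j \big| \le \E|a_j|\cdot \sup_t|h(t)-\tilde h(t)| \le C\,\E|a|^3\,\inftynorm{\vx} .
\]

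It then remains to bound the surrogate term $\tilde v_j$. Since $\E a_j = 0$ I would subtract the value of $\tilde h$ at $0$ and write $\tilde v_j = 2\,\E\big[a_j(\Phi(a_j x_j/\sigma_j) - \tfrac12)\big]$. As $\Phi$ is $\tfrac{1}{\sqrt{2\pi}}$-Lipschitz, $|\Phi(u)-\tfrac12| \le \tfrac{1}{\sqrt{2\pi}}|u|$, so
\[
|\tilde v_j| \le 2\cdot \tfrac{1}{\sqrt{2\pi}}\,\frac{|x_j|}{\sigma_j}\,\E a_j^2 = \sqrt{\tfrac{2}{\pi}}\,\frac{|x_j|}{\sigma_j} \le C\,\inftynorm{\vx} \le C\,\E|a|^3\,\inftynorm{\vx},
\]
again using $\sigma_j \ge 1/\sqrt2$ and $\E|a|^3 \ge 1$. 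Combining the two displays and taking the maximum over $j$ finishes the proof.

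The point that needs the most care is that $S_j$ may be atomic (for instance when $a$ is Bernoulli), so $P(S_j \ge -t x_j)$ is only piecewise constant and naive density manipulations fail. This is precisely why I would use the sup-norm Berry--Esseen bound \eqref{eq:strk inf norm} rather than the $L^1$ version used in Lemma~\ref{first2}: continuity of $\Phi$ guarantees that both the left and right limits of the complementary CDF of $S_j/\sigma_j$ stay within $C\beta_j^3$ of $1-\Phi$, so the uniform estimate on $h - \tilde h$ holds regardless of atoms, and the subsequent integration against the (possibly discrete) law of $a_j$ is unaffected.
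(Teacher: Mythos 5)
Your proof is correct and follows essentially the same route as the paper's: condition on the $j$-th coordinate, apply the sup-norm Berry--Esseen bound \eqref{eq:strk inf norm} to the leave-one-out sum $S_j$, and obtain a main term of order $|x_j|\,\E a_j^2$ from the Gaussian density (Lipschitz) bound plus an error of order $\E|a|^3\,\inftynorm{\vx}\,\E|a_j|$. The only cosmetic difference is that you introduce the Gaussian surrogate $\tilde h$ explicitly and rely only on $\E a=0$, whereas the paper invokes the symmetry of $a$ to rewrite the conditional expectation as $\pm P(|S|\le |r|x_i)$ before comparing with the Gaussian.
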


\begin{proof}

Establishing the notation $\ip{ \va, \vx } = \sum_{k = 1}^n a_k x_k$ where without loss of 
generality, $x_i \geq 0$, define for convenience $S = \sum_{k\neq i}^n a_k x_k$ and let $F_S$ be the cumulative distribution function of $S$. Consider an 
arbitrary constant $r$. 

\begin{align*}
\abs{\E \theta(S +  r x_i)\cdot r} 
  &= \abs{r \int_\R \sign(t + r x_i) dF_S(t)} \\
  &= |r| \abs{P(S \geq -r x_i) - P(S < -r x_i)} = |r|P(|S| \leq |r| x_i) \\
  &\leq \abs{r} P(\abs{g} \leq \abs{r} x_i) + \abs{r} \cdot \abs{P(\abs{g} \leq \abs{r} x_i) - P(|S| \leq |r| x_i)}.
\end{align*}
The second term in the last inequality may be bounded using the Berry-Esseen result in Equation \eqref{eq:strk inf norm}.  This gives

$$ \abs{\E \theta(S +  r x_i)\cdot r} \leq |r| \bigg\{\sqrt{\frac{2}{\pi}}|r| x_i + 2\left(\sum_{k \neq i} x_k^2\right)^{-3/2} \E|a|^3\sum_{k \neq i} |x_k|^3\bigg\}. $$

Note $\|\vx\|_3^3 \leq \|\vx\|_\infty\|\vx\|_2^2 = \|\vx\|_\infty \leq c/ \E\abs{a}^3 \leq 1/8$, where the last inequality follows since $\E\abs{a}^3 \geq (\E a^2)^{3/2} = 1$ and we take $c \leq 1/8$. Then $x_i^3 \leq 1/8$, $x_i^2 \leq 1/4$, so that $\sum_{k \neq i} x_i^2 \geq 3/4.$
Observing furthermore that $\|\vx\|_\infty \geq \sum_{k \neq i} x_k^2 \|\vx\|_\infty \geq \sum_{k \neq i} |x_k|^3$, we have the bound 
$$ \abs{\E \theta(S +  r x_i)\cdot r} \leq C\abs{r}^2 x_i + C \abs{r} \E|a|^3 \|\vx\|_\infty.$$

We may express a single coordinate of $v_x = \E \theta( \ip{ \va, \vx }) \cdot \va$ as $\E \theta( \ip{ \va, \vx }) \cdot a_i$. Then,
\begin{align*}
|\E \theta( \ip{ \va, \vx }) \cdot a_i| 
  &\leq \int_\R \abs{\E \theta(S + t x_i) \cdot t}  dF_{a_i}(t) \\
  &\leq \int_\R \big( Ct^2 x_i + C|t|\E|a|^3 \|\vx\|_\infty \big) dF_{a_i}(t) \\
  &= C x_i \E a_i^2 + C \E|a|^3 \|\vx\|_\infty \E|a_i| \\
  &\leq C x_i + C \E|a|^3 \|\vx\|_\infty.
\end{align*}

Observing that $\E|a|^3 \geq \E a^2 = 1$ completes the proof of the lemma. \qedhere

\end{proof}

We now prove Proposition~\ref{expectation2}.  
\begin{proof}[Proof of Proposition~\ref{expectation2}]
Define $\vz = \vv_x/\|\vv_x\|_2$ and note that $ \|\vz\|_\infty = \inftynorm{\vv_x}/\twonorm{\vv_x}$.   Applying Lemma~\ref{zeroth} and Lemma~\ref{second2} yields $ \|\vz\|_\infty \leq CE|a|^3\|\vx\|_\infty$. Hence,
\begin{align*}
\|\vv_x\|_2 &= \ip{\vv_x, \vv_x/\|\vv_x\|_2} = \E \sign(\ip{ \va, \vx}) \ip{ \va, \vz}\\
&\leq \E \sign(\ip{ \va, \vz}) \ip{ \va, \vz} = \ip{\vv_z, \vz}\\
&\leq \lambda + C\E|a|^3\|\vz\|_\infty \leq \lambda + C(\E|a|^3)^2\|\vx\|_\infty.
\end{align*}
In the last line, we used Lemma~\ref{first2}.  Together with Lemma \ref{first2} we have now verified both geometric constraints in \eqref{eq:approx ident}.

Combining results, we have
\begin{align*}
|\ip{ \vv_x, \vx' } - \ip{ \lambda \vx, \vx' }|^2 &\leq \|\vv_x\|_{2}^2 - \lambda^2 + 2\lambda(\lambda - \ip{ \vv_x, \vx })\\
&= (\|\vv_x\|_{2} + \lambda)(\|\vv_x\|_{2} - \lambda) + 2\lambda(\lambda - \ip{ \vv_x, \vx })\\
&\leq C\left((\E|a|^3)^2\|\vx\|_\infty + \E|a|^3\|\vx\|_3^3 \right).
\end{align*}
Recall that $\|\vx\|_\infty \geq \|\vx\|_3^3$ and thus the first term is dominant.  We may collect terms to conclude
$$|\E f_{\vx}(\vx') - \ip{ \lambda \vx, \vx' }|\leq C \E|a|^3\|\vx\|_\infty^{1/2}.$$
This completes the proof of Proposition~\ref{expectation2}. \qedhere
\end{proof}

\subsection{Concentration: Proof of Proposition~\ref{concentration}}\label{concentrationp}

We need to control the random variable
$$
Z := \sup_{\vz\in K - K} |f_{\vx}(\vz) - \E f_{\vx}(\vz)|.
$$
This will be done using techniques from probability in Banach spaces, following the argument in \cite[Proposition 4.2]{pv2012}. 
The symmetrization lemma below allows us to essentially replace $Z$ by the random variable
$$
Z' := \sup_{\vz\in K - K} \frac{1}{m} \abs{ \sum_{i = 1}^m \eps_i y_i\ip{ \va_i, \vz }}.
$$
where $\eps_i$ denote independent symmetric Bernoulli random variables.

\begin{lemma}[Symmetrization]\label{symmetrization}
  We have
  \begin{equation}\label{sym1}
    \E Z \leq 2 \E Z'.
  \end{equation}
  Furthermore, for each $t > 0$ we have the deviation inequality 
  \begin{equation}\label{sym2}
    P ( Z \geq 2 \E Z + t ) \leq 4P(Z' > t/2).
  \end{equation}
\end{lemma}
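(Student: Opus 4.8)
The plan is to prove the symmetrization lemma using the standard symmetrization technique from probability in Banach spaces, treating $f_{\vx}(\vz) - \E f_{\vx}(\vz)$ as an empirical process indexed by $\vz \in K-K$. First I would introduce an independent copy of the random measurements. Writing $X_i(\vz) := y_i \ip{\va_i, \vz}$, note that $f_{\vx}(\vz) = \frac{1}{m}\sum_i X_i(\vz)$ and $\E f_{\vx}(\vz) = \E X_1(\vz)$. Let $\{X_i'(\vz)\}$ be an independent copy of the sequence $\{X_i(\vz)\}$ (i.e. built from independent copies $\va_i'$ and $y_i'$), so that $\E f_{\vx}(\vz) = \E\left[\frac{1}{m}\sum_i X_i'(\vz) \,\middle|\, \{X_i\}\right]$. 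Then by Jensen's inequality applied to the convex supremum functional,
\[
Z = \sup_{\vz} \abs{\frac{1}{m}\sum_i \left(X_i(\vz) - \E X_i(\vz)\right)} \leq \E_{X'}\sup_{\vz} \abs{\frac{1}{m}\sum_i \left(X_i(\vz) - X_i'(\vz)\right)}.
\]

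The key step is then to introduce independent symmetric Bernoulli signs $\eps_i$. Since $X_i(\vz) - X_i'(\vz)$ is a symmetric random variable (the two copies are i.i.d., so swapping them multiplies the difference by $-1$ without changing the joint distribution), the sequence $\{X_i(\vz) - X_i'(\vz)\}$ has the same distribution as $\{\eps_i(X_i(\vz) - X_i'(\vz))\}$ jointly over all $\vz$. Taking expectations and applying the triangle inequality to split the sum, I would bound
\[
\E Z \leq \E\sup_{\vz}\abs{\frac{1}{m}\sum_i \eps_i\left(X_i(\vz) - X_i'(\vz)\right)} \leq 2\,\E\sup_{\vz}\abs{\frac{1}{m}\sum_i \eps_i X_i(\vz)} = 2\E Z',
\]
where the final equality uses that the $X_i'$ term contributes an identically distributed copy. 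This establishes \eqref{sym1}.

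For the deviation inequality \eqref{sym2}, I would follow the analogous tail-version of symmetrization. The idea is to use the independent copy together with a one-sided control: on the event that $Z \geq 2\E Z + t$, one shows that the symmetrized process $Z'$ exceeds $t/2$ with comparable probability. Concretely, one introduces the copy $X_i'$ and uses the fact that by Chebyshev/Markov-type control the primed process is close to its mean $\E Z$ with probability at least $1/2$ (or one uses a direct symmetrization-of-tails argument); combining $\abs{\frac{1}{m}\sum_i(X_i(\vz) - \E X_i(\vz))} \geq 2\E Z + t$ with $\abs{\frac{1}{m}\sum_i(X_i'(\vz) - \E X_i(\vz))} \leq \E Z$ forces the difference process to exceed $\E Z + t$, which after inserting signs and splitting yields the factor-$4$ and the threshold $t/2$. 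I would cite the standard reference for this tail-symmetrization (as the authors indicate they follow \cite[Proposition 4.2]{pv2012}).

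The main obstacle I anticipate is not conceptual but bookkeeping: making the tail inequality \eqref{sym2} precise requires carefully tracking the constants (the factor $2\E Z$ in the threshold, the factor $4$, and the halving of $t$) through the desymmetrization step, and justifying the ``probability at least $1/2$'' estimate that lets one replace $\E Z$ by the realized value of the primed process. This is where one must be careful that $Z'$ is genuinely the symmetrized version with the same indexing set $K-K$ and that the noise variables $y_i$ (or $\eps_i y_i$ in the bit-flip model) are handled correctly, since $y_i$ itself depends on $\va_i$. Everything else is a routine application of Jensen's inequality and the symmetry of $X_i - X_i'$.
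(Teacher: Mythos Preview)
Your proposal is correct and is exactly the standard symmetrization argument that the paper defers to: the paper does not prove this lemma in-text but states that the proof is identical to \cite[Lemma~5.1]{pv2012} (not Proposition~4.2, which is the surrounding concentration argument), and that cited proof is precisely the independent-copy/Jensen step for \eqref{sym1} together with the Markov-plus-decoupling tail argument you outlined for \eqref{sym2}. The only point to tidy is the bookkeeping you already flagged: make explicit that $P(\tilde Z \le 2\E Z) \ge 1/2$ by Markov's inequality, then combine independence of $Z$ and $\tilde Z$ with the triangle-inequality split to extract the constants $4$ and $t/2$.
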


The proof of this result is identical to the proof of \cite[Lemma 5.1]{pv2012}.

The following is a standard Gaussian concentration inequality, which is a simple application of \cite[Theorem 7.1]{Ledoux}.

\begin{lemma}[Gaussian concentration]\label{gconcentration}
  Given a set $K \subseteq B_2^n$, we have
$$ P\parenth{ \sup_{\vz\in K - K} \ip{ \vg, \vz } - w(K) >  r } \leq e^{-r^2/8}, \ r > 0.$$ 
\end{lemma}

The following inequality is a specialization of \cite[Lemma 4.6]{LT}.  (In contrast to \cite[Lemma 4.6]{LT} we allow the consideration of a semi-norm, but the proof of the inequality remains unchanged.)

\begin{lemma}[Contraction Principle] \label{comparison}
Consider sequences of independent symmetric random variables $\eta_i$ and $\xi_i$ such that for some 
scalar $M \geq 1$, and every $i$ and $t > 0$,
$$P(|\eta_i| > t) \leq M P(|\xi_i| > t).$$
Let $\opnorm{\cdot}$ denote a semi-norm.  Then for any finite sequence $x_i$ and a scalar $p\geq 1$, we have
\[
\E \left(
\bigg\|\sum_{i=1}^{n} \eta_i x_i\bigg\|\right)^p \leq \E \left( M \bigg\|\sum_{i=1}^{n} \xi_i x_i\bigg\|
\right)^p.
\]
\end{lemma}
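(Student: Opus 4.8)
The plan is to reconstruct the Ledoux--Talagrand argument in a form that makes transparent why replacing a norm by a semi-norm changes nothing: the only features of $\opnorm{\cdot}$ that will enter are the triangle inequality, positive homogeneity, and nonnegativity, and at no point will positive-definiteness be used. First I would symmetrize by writing $\eta_i = \eps_i|\eta_i|$ and $\xi_i = \eps_i|\xi_i|$ with a common family of independent Rademacher signs $\eps_i$ independent of the magnitudes (legitimate since the $\eta_i$ and $\xi_i$ are symmetric), and introduce the symmetrized functional
\[
h(a_1,\dots,a_n) := \E_{\eps}\,\opnorm{\sum_{i=1}^n \eps_i a_i x_i}^p .
\]
Because $v \mapsto \opnorm{v}^p$ is convex (the triangle inequality together with convexity of $t \mapsto t^p$ on $[0,\infty)$) and positively homogeneous of degree $p$, the function $h$ is even in each coordinate, convex, homogeneous of degree $p$, and --- the point I will use repeatedly --- nondecreasing in each $|a_i|$.

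The key device, which controls the constant, is to reduce the hypothesis with a general $M \ge 1$ to the clean case $M = 1$ by thinning. Let $\delta_1,\dots,\delta_n$ be independent $\mathrm{Bernoulli}(1/M)$ variables, independent of everything else, and set $\zeta_i := \delta_i\,\eta_i$. For $t>0$ one has $P(|\zeta_i|>t) = \tfrac1M\,P(|\eta_i|>t) \le P(|\xi_i|>t)$ by hypothesis, so $|\zeta_i|$ is stochastically dominated by $|\xi_i|$ with constant $1$. Hence I can couple $|\zeta_i|\le|\xi_i|$ almost surely (quantile coupling, independently across $i$) while keeping the shared signs $\eps_i$; monotonicity of $h$ in each coordinate then gives the constant-$1$ contraction
\[
\E\,\opnorm{\sum_i \zeta_i x_i}^p = \E\,h(|\zeta_1|,\dots,|\zeta_n|) \le \E\,h(|\xi_1|,\dots,|\xi_n|) = \E\,\opnorm{\sum_i \xi_i x_i}^p .
\]

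It remains to pass back from the thinned sum to the full sum, and this is where the factor $M^p$ is produced. Since $\E_{\delta}\big[\sum_i \delta_i \eta_i x_i\big] = \tfrac1M \sum_i \eta_i x_i$, Jensen's inequality applied to the convex map $v\mapsto\opnorm{v}^p$ (conditionally on $\eta$), together with positive homogeneity, gives
\[
\frac{1}{M^p}\,\E\,\opnorm{\sum_i \eta_i x_i}^p = \E\,\opnorm{\E_{\delta}\sum_i \delta_i\eta_i x_i}^p \le \E\,\opnorm{\sum_i \delta_i \eta_i x_i}^p = \E\,\opnorm{\sum_i \zeta_i x_i}^p .
\]
Chaining this with the previous display yields $\E\,\opnorm{\sum_i \eta_i x_i}^p \le M^p\,\E\,\opnorm{\sum_i \xi_i x_i}^p$, which is the claim.

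I expect the main obstacle to be obtaining the correct constant $M^p$ independent of $n$, rather than any individual step. The tempting route of replacing $\xi_i$ by $\eta_i$ one coordinate at a time, using the scalar comparison $\E\psi(|\eta_i|)\le M\,\E\psi(|\xi_i|)$ for nondecreasing $\psi\ge 0$, compounds to a useless factor $M^n$; moreover, tail domination with constant $M$ does \emph{not} provide an almost-sure bound $|\eta_i|\le M|\xi_i|$, so one cannot simply invoke monotonicity of $h$ with the factor absorbed. The thinning-plus-Jensen device circumvents both difficulties, splitting the estimate into an honest stochastic domination (constant $1$) and a single convexity step that generates exactly $M^p$ through homogeneity. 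Finally, since every ingredient --- the convexity and monotonicity of $h$, the coupling, and the Jensen step --- uses only the semi-norm axioms, the argument is verbatim the same as in the normed case, which is precisely the sense in which the cited proof ``remains unchanged.''
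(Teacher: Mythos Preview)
Your proposal is correct and is precisely the Ledoux--Talagrand selector argument that the paper cites without reproducing: thin $\eta_i$ by independent $\mathrm{Bernoulli}(1/M)$ variables to obtain honest stochastic domination by $|\xi_i|$, use monotonicity of the symmetrized functional (which follows from convexity plus evenness in each coordinate) to compare the thinned sum to the $\xi$-sum, and recover the factor $M^p$ via a single application of Jensen to $v\mapsto\opnorm{v}^p$. Your explicit observation that only the triangle inequality, positive homogeneity, and nonnegativity of $\opnorm{\cdot}$ enter is exactly the content of the paper's parenthetical remark that the proof ``remains unchanged'' when a norm is replaced by a semi-norm.
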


We will first apply Lemma~\ref{comparison} to derive a moment bound on $Z'$.  We then convert the moment bound back into a tail bound which we plug into the right-hand side of Equation \eqref{sym2}.

Because $\eps_i y_i \va_i$ has the same distribution as $\va_i$, and by the symmetry of $K - K$,
\[
 \E (Z')^p 
 = \E \parenth{ \sup_{\vz\in K - K} \frac{1}{m} \sum_{i = 1}^m \ip{ \va_i, \vz } }^p 
 = \E \parenth{ \sup_{\vz\in K - K} \frac{1}{m} \sum_{i = 1}^m \sum_{j = 1}^n (a_i)_j z_j }^p.
 \]
We apply Lemma~\ref{comparison} with $(a_i)_j$ in place of $\eta_i$, $e_i e_j^*$ in place of $x_i$ (where $e_i$ is the i-th standard basis vector), $\xi_i$ as independent $N(0,1)$ random variables, 
and the matrix semi-norm defined by $\| A \|: = \sup_{\vz\in K - K} \sum_{i,j} A_{i,j} z_j$. 
To this end, recall that $(a_i)_j$ are distributed identically with $a$. 
Since $a$ is a sub-gaussian random variable, it follows from definition \eqref{eq:subgaussian} that 
$$
P(|a| > t) \leq C P(|g|\cdot \kappa > t), \quad t > 0.
$$
Therefore an application of Lemma~\ref{comparison} allows us to replace
$(a_i)_j$ by $(C \kappa) (g_i)_j$ and thus conclude that 
\begin{equation}\label{pone}
  \E (Z')^p 
  \leq \E \parenth{\sup_{\vz\in K - K} 
  \frac{1}{m} \sum_{i = 1}^m \sum_{j = 1}^n C \k (g_i)_j z_j }^p 
  = \E \parenth{ \frac{C \kappa }{\sqrt{m}} \sup_{\vz\in K - K} \ip{ \vg, \vz } }^p.
\end{equation}

To further develop this inequality, we express the Gaussian concentration tail bound (Lemma~\ref{gconcentration})
in terms of moment bounds. For convenience, define 
$$
\xi = \sup_{\vz\in K - K} \ip{ \vg, \vz }.$$
Using Lemma~\ref{gconcentration} and the equivalence of sub-gaussian properties, for instance 
in \cite[Lemma 5.5]{Vershynin2012}, we have
$$
(\E (\xi - w(K))_+^p)^{1/p} \leq C\sqrt{p}.
$$
Above $(\xi - w(K))_+ := \max(\xi - w(K), 0)$.
Applying Minkowski's inequality gives
$$
(\E \xi^p)^{1/p} \leq (\E (\xi - w(K))_+^p)^{1/p} + (\E w(K)^p)^{1/p} \leq C\sqrt{p} + w(K).
$$
Combine this with Equation \eqref{pone} to give the moment bound
\begin{equation}
\label{eq:moment bound}
(\E (Z')^p)^{1/p} \leq C \cdot \frac{\k (\sqrt{p} + w(K))}{\sqrt{m}}.
\end{equation}
For convenience, set $\beta = \sqrt{p}$.  
We now use Markov's inquality to convert the moment bound to a tail bound.  Set $t =  e \cdot (\E (Z')^p)^{1/p}$ and note that by Equation \eqref{eq:moment bound} above,
\[t \leq C \cdot \frac{\k (\beta + w(K))}{\sqrt{m}}.\]
Further, by Markov's inequality we have
\begin{equation}
\label{eq:tail bound}
P(Z' \geq t) \leq \frac{\E(Z')^p}{t^p} \leq e^{-\beta^2}
\end{equation}

To complete the proof of the proposition, apply Lemma~\ref{symmetrization}:  The moment bound \eqref{eq:moment bound} with $p = 1$ controls $\E(Z')$ and the tail bound \eqref{eq:tail bound} controls the right-hand side of Equation \eqref{sym2}.

\section{Conclusion}
\par 
In contrast to standard compressed sensing, one-bit compressed sensing is infeasible when the measurement vectors are Bernoulli and the signal is extremely sparse.  Nevertheless, we show that when the signal is sparse, but not overly sparse, it may be recovered from Bernoulli (or more generally, sub-gaussian) one-bit measurements. To our knowledge, these are the first theoretical results in one-bit compressed sensing that specifically allow non-Gaussian measurements.

%

\newpage

\section*{Appendix}

\subsection{Proof of Lemma~\ref{fourth}}

We apply a Lindeberg replacement argument in a way similar to 
\cite[Proposition D.2]{tao2010random}. Define $v_j = (x_j, z_j)$, and let 
$\vg \in \R^n$ be a vector of independent standard Gaussian variables which is also independent of $\va$. Define
 $S_i = \sum_{j = 1}^{i - 1} a_j v_j + \sum_{j = i + 1}^n g_j v_j$
 and $\phi(v) = \theta(x)z$ (where $v = (x, z)$). Define $(S_{i})_{1}$ to be the $x$ component and $(S_{i})_{2}$ to be the $z$ component. 
  Then note by telescoping,
 $$|\E \theta(\ip{ \va, \vx })\ip{ \va, \vz } - \E \theta(\ip{ \vg, \vx })\ip{ \vg, \vz }| 
= \abs{\E \phi\parenth{\sum_{j = 1}^{n} a_j v_j} 
- \E \phi\parenth{\sum_{j = 1}^{n} g_j v_j} }$$
$$\leq \sum_{i = 1}^n |\E \phi(S_i + a_i v_i) - \E \phi(S_i + g_i v_i)|.$$

By Taylor's theorem with remainder, we have
\[
\phi(S_i + a_i v_i) = \phi(S_i) + \sum_{|\alpha| = 1} (a_iv_i)^\alpha 
\partial^\alpha \phi(S_i) 
+ \frac{1}{2} \sum_{|\alpha| = 2} (a_iv_i)^\alpha 
\partial^\alpha \phi(S_i)  + \frac{1}{6}\sum_{|\alpha| = 3} (a_iv_i)^\alpha 
\partial^\alpha \phi(S_i') 
\]
for some $S_i'$ on the line segment joining $S_i$ and $S_i + a_iv_i$. A similar result holds for 
$\phi(S_i + g_iv_i)$ with respective $S_i''$. Observe that since $\E a = \E g = 0$ and $\E a^2 = \E g^2 = 1$, 
the zeroth to second order terms cancel upon taking expectations in the difference.
$$
|\E \phi(S_i + a_iv_i) - \E \phi(S_i + g_iv_i)| = \frac{1}{6}\abs{\E\sum_{\abs{\alpha}=3}(a_{i}v_{i})^{\alpha}\partial^{\alpha}\phi(S_{i}')-\E\sum_{\abs{\alpha}=3}(g_{i}v_{i})^{\alpha}\partial^{\alpha}\phi(S_{i}'')}.
$$
Consider the first expectation on the right hand side. Observe that the partials in the error vanish except 
when at most one partial is taken on the second argument of $\phi$, yielding either $\theta''(x)$ 
or $\theta'''(x)z$. Furthermore, note that since $S_i' $ is on the line segment joining $S_i$ and $S_i + a_iv_i$, we may apply the bound $|(S_i')_2| \leq |(S_i)_2| + |a_i z_i|$ to conclude
$$\E \abs{\sum_{|\alpha| = 3} (a_iv_i)^\alpha \partial^\alpha \phi(S_i') } \leq C\E  |a_i|^3 (x_i^2 |z_i| + |x_i|^3) (\|\theta''\|_\infty + \|\theta'''\|_\infty (|(S_i)_2| + |a_iz_i|))$$
$$= C (x_i^2 |z_i| + |x_i|^3) (\tau_2 \E |a_i|^3 + \tau_3 (\E |(S_i)_2 a_i^3| + |z_i|\E a_i^4))$$

Observe that $(S_i)_2$ and $a_i$ are independent, and $(\E |(S_i)_2|)^2 \leq \E (S_i)_2^2 \leq 1$ by Cauchy-Schwarz and the fact that the variance of an independent sum is a sum of variances. Further observing that $|z_i| \leq 1$ and $\E |a|^3 \leq \E a^4$, we may collect terms to conclude
$$\E \abs{\sum_{|\alpha| = 3} (a_iv_i)^\alpha \partial^\alpha \phi(S_i') }  \leq C \|\vx\|_\infty (|x_i z_i| + x_i^2) (\tau_2 + 2\tau_3)\E a_i^4.$$

A similar bound follows for the remainder from the Gaussian expansion, and observe that the Gaussian remainder can be absorbed since $\E a^4 \geq \E a^2 = 1$.  Note that  
$$ \|\vx\|_\infty \left(\sum_{i = 1}^n |x_i z_i| + x_i^2 \right) \leq \|\vx\|_\infty (\| \vx\|_2 \| \vz\|_2 + \| \vx \|_2^2) \leq 2 \| x \|_\infty$$
so that summing over $i$ from 1 to $n$, 
\[
\bigg|\E \theta(\ip{ \va, \vx })\ip{\va, \vz } - \E \theta(\ip{ \vg, \vx })\ip{ \vg, \vz } \bigg|
\leq  C(\tau_2 + \tau_3)\E a^4 \| \vx \|_\infty,
\]
which completes the proof of the lemma.

\subsection{Total variation: sign function}\label{thm3p}

We consider the setting of Theorem \ref{thm2}, where $\theta(t) = (2 - p) \sign(t)$, with the additional assumption that  $\|a - g\|_{TV} \leq \eps$.

\begin{theorem}[Estimating a signal with no noise]\label{thm3}
We remain in the setting of Theorem~\ref{thm2} with the additional condition 
$\|a - g\|_{TV} \leq \eps$. Then for each $\beta > 0$, with probability at least 
$1 - 4e^{-\beta^2},$ 
the solution 
$\widehat{\vx}$ to the optimization problem~(\ref{optimization}) satisfies
\[
\|\vx - \widehat{\vx}\|_2^2 \leq C \sqrt{\k} \eps^{1/8} + \frac{C \k}{(2 - p)\sqrt{m}}(w(K) + \beta).
\]
\end{theorem}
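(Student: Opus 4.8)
The plan is to invoke Lemma~\ref{lem:general structure} exactly as in the proof of Theorem~\ref{thm2}, so that the whole problem reduces to producing a bound $\alpha$ on $|\E f_{\vx}(\vx') - \lambda\ip{\vx,\vx'}|$, uniform over $\vx'\in K$, where $\lambda=(2-p)\sqrt{2/\pi}$. Since $\theta(t)=(2-p)\sign(t)$, the factor $(2-p)$ pulls out of the expectation and out of $\lambda$ simultaneously, so it suffices to treat the pure sign case and show $|\E\sign(\ip{\va,\vx})\ip{\va,\vx'} - \lambda_0\ip{\vx,\vx'}|\le C\sqrt{\k}\,\eps^{1/8}$ with $\lambda_0=\sqrt{2/\pi}$. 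The $(2-p)$ scaling is then reinserted at the end, after which it cancels in the first term and survives in the second term of Lemma~\ref{lem:general structure}, reproducing exactly the two terms of the claimed bound.

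For the sign case I would reuse the geometric scheme of Proposition~\ref{expectation2}. Writing $\vv_x:=\E\sign(\ip{\va,\vx})\va$, we have $\E f_{\vx}(\vx')=\ip{\vv_x,\vx'}$, so by Cauchy--Schwarz it is enough to control $\|\vv_x-\lambda_0\vx\|_2$, and hence, exactly as in \eqref{eq:approx ident}, to establish the two scalar estimates $\ip{\vv_x,\vx}\approx\lambda_0$ and $\|\vv_x\|_2\lesssim\lambda_0$. The crucial point is that both follow from a single estimate that is uniform over all unit vectors and, unlike Lemmas~\ref{first2} and \ref{second2}, carries no dependence on $\|\vx\|_\infty$.

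That estimate---the heart of the argument---is the claim that for every $\vu\in S^{n-1}$ one has $|\,\E|\ip{\va,\vu}|-\lambda_0\,|\le C\k\,\eps^{1/4}$. I would prove it by coupling rather than by a Lindeberg/Berry--Esseen swap, since a coordinatewise swap inevitably produces a forbidden $\onenorm{\vu}$ factor. Concretely, place $\va$ and a standard Gaussian $\vg$ on a common space using the optimal total-variation coupling in each coordinate, so that the coordinates remain independent across $k$ and $\Pr{a_k\neq g_k}\le\eps$. Because the coordinates are mean-zero and independent, $\E|\ip{\va,\vu}-\ip{\vg,\vu}|^2=\sum_k u_k^2\,\E(a_k-g_k)^2$, and Cauchy--Schwarz together with the fourth-moment bound $\E(a_k-g_k)^4\le C\k^4$ gives $\E(a_k-g_k)^2\le C\k^2\sqrt{\eps}$; summing yields $\E|\ip{\va,\vu}-\ip{\vg,\vu}|^2\le C\k^2\sqrt{\eps}$. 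Since $t\mapsto|t|$ is $1$-Lipschitz and $\ip{\vg,\vu}\sim N(0,1)$, Jensen then gives $|\,\E|\ip{\va,\vu}|-\E|g|\,|\le (C\k^2\sqrt\eps)^{1/2}=C\k\,\eps^{1/4}$.

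To finish, apply this uniform estimate twice. Taking $\vu=\vx$ controls $\ip{\vv_x,\vx}=\E|\ip{\va,\vx}|$ to within $C\k\eps^{1/4}$ of $\lambda_0$. For the norm bound, set $\hat\vv=\vv_x/\|\vv_x\|_2$ and note $\|\vv_x\|_2=\E\sign(\ip{\va,\vx})\ip{\va,\hat\vv}\le\E|\ip{\va,\hat\vv}|\le\lambda_0+C\k\eps^{1/4}$, where the last step is the uniform estimate applied to the unit vector $\hat\vv$. Feeding both into the identity $\|\vv_x-\lambda_0\vx\|_2^2=(\|\vv_x\|_2^2-\lambda_0^2)+2\lambda_0(\lambda_0-\ip{\vv_x,\vx})$ gives $\|\vv_x-\lambda_0\vx\|_2^2\le C\k\eps^{1/4}$, hence $\alpha\le C\sqrt{\k}\,\eps^{1/8}$; Lemma~\ref{lem:general structure} then yields the theorem (with the usual convention that the bound is vacuous, since $\vx,\widehat\vx\in B_2^n$, when $\eps$ is too large for the estimate to be informative). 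The main obstacle is precisely the uniform estimate of the third paragraph: total-variation closeness does not by itself control the Lipschitz functional $|\cdot|$, so one must pass through a coupling, and the conversion of a mismatch probability $\eps$ into the $L^2$ transport cost $\sqrt\eps$ is what forces the exponent to degrade from $\eps$ to $\eps^{1/4}$ and, after the square root coming from Cauchy--Schwarz in $\vx'$, to the final $\eps^{1/8}$.
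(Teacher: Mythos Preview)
Your proof is correct, and its overall architecture matches the paper's exactly: reduce to Lemma~\ref{lem:general structure}, define $\vv_x=\E\sign(\ip{\va,\vx})\va$, establish the two scalar constraints $\ip{\vv_x,\vx}\approx\lambda_0$ and $\|\vv_x\|_2\lesssim\lambda_0$, and combine them via the identity $\|\vv_x-\lambda_0\vx\|_2^2=(\|\vv_x\|_2^2-\lambda_0^2)+2\lambda_0(\lambda_0-\ip{\vv_x,\vx})$. The difference lies entirely in how the uniform estimate $\big|\E|\ip{\va,\vu}|-\lambda_0\big|\le C\k\,\eps^{1/4}$ is obtained. The paper proves it by a Lindeberg swap: it approximates $|t|$ by the smooth $\phi(t)=\sqrt{c+t^2}$, Taylor-expands to second order so that the zeroth- and first-order terms cancel and the remainder carries a factor $u_k^2$ (not $|u_k|$), and then compares the two remainders using the layer-cake formula truncated at level $M$, balancing $M\eps$ against a Chebyshev tail $\E R^2/M$ and finally optimizing over both $M$ and the smoothing scale $c$. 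Your route via maximal coupling and Cauchy--Schwarz on $\E(a_k-g_k)^2=\E[(a_k-g_k)^2\mathbf{1}_{a_k\neq g_k}]\le(\E(a_k-g_k)^4)^{1/2}\eps^{1/2}$ is shorter, avoids the smoothing step entirely, and lands on the same exponent $\eps^{1/4}$ (hence $\eps^{1/8}$ after the final square root). One small correction: your remark that a coordinatewise swap ``inevitably produces a forbidden $\onenorm{\vu}$ factor'' is not accurate---the paper's Lindeberg argument avoids this precisely because the second-order remainder contributes $u_k^2$, which sums to $\|\vu\|_2^2=1$. So both methods work; yours is just more direct here.
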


To prove the theorem, we only need bound $\alpha$ in Lemma~\ref{lem:general structure}.  This is contained in the following proposition.  For simplicity, we will prove the theorem in the noiseless case when $p = 0$, but note that the noisy case follows from a simple rescaling argument as in the proof of Theorem \ref{thm2}. Below, $\lambda = \sqrt{2/\pi}$.

\begin{proposition}[Expectation]\label{expectation3}
For $\vx, \vx' \in B_2^n$,
$$|\E f_{\vx}(\vx') - \ip{ \lambda \vx, \vx' }|\leq C(\E a^4)^{1/8}\eps^{1/8}.$$
\end{proposition}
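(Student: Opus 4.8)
The plan is to bound the quantity $|\E f_{\vx}(\vx') - \lambda\ip{\vx,\vx'}|$ by comparing the sub-gaussian measurement vector $\va$ to the Gaussian vector $\vg$ directly, exploiting the total-variation closeness $\|a-g\|_{TV}\le\eps$ rather than the moment-comparison (Lindeberg) strategy used for the spread-signal theorems. Recall from \eqref{expectation fx} that $\E f_{\vx}(\vx') = \E\,\sign(\ip{\va,\vx})\ip{\va,\vx'}$, and that the analogous Gaussian expectation equals $\lambda\ip{\vx,\vx'}$ exactly. So the task reduces to estimating
\[
\big|\E\,\sign(\ip{\va,\vx})\ip{\va,\vx'} - \E\,\sign(\ip{\vg,\vx})\ip{\vg,\vx'}\big|.
\]
The obstacle here is that $\sign$ is discontinuous and $\ip{\va,\vx'}$ is unbounded, so a naive coordinate-wise replacement produces no gain; instead I would couple $\va$ and $\vg$ coordinate by coordinate.

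First I would invoke the standard consequence of the total-variation bound: since $\|a-g\|_{TV}\le\eps$, there is a coupling of each coordinate pair $(a_j,g_j)$ with $P(a_j\ne g_j)\le\eps$, and taking independent such couplings across $j$ gives a coupling of $\va,\vg$ with $\E\|\va-\vg\|_0 \le n\eps$ controlling the expected number of differing coordinates. However, because the error estimate cannot depend on $n$, I would not use $\|\va-\vg\|_0$ directly but rather split the expectation into a \emph{bad} event (where the two inner products straddle the sign discontinuity, or where one of the inner products is atypically large) and a \emph{good} event. On the good event the integrand difference is small; the bad event must be shown to have small probability. The $\eps^{1/8}$ exponent strongly suggests an optimization over a free threshold parameter $t>0$: one writes the difference in terms of a level-set integral $\int_0^\infty\big[P(\ip{\va,\vx}\ge s)-P(\ip{\vg,\vx}\ge s)\big]\,ds$-type expression (as in the proof of Lemma~\ref{first2}) together with a truncation of the $\ip{\va,\vx'}$ factor at level $t$, then balances the truncation tail (controlled by $\E a^4$ via Markov, giving a factor like $(\E a^4)/t^{?}$) against a small-ball/anti-concentration term and the total-variation leakage.

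The key steps in order would be: (i) write $\E\,\sign(\ip{\va,\vx})\ip{\va,\vx'}$ and its Gaussian counterpart in integral (layer-cake) form so that the $\sign$ is absorbed and only cumulative distribution functions of linear forms appear; (ii) truncate the $\ip{\va,\vx'}$ factor at a threshold $t$, estimating the discarded tail by the fourth-moment bound $\E a^4\le 16\k^4$ and Chebyshev/Markov, incurring an error polynomial in $1/t$; (iii) for the truncated part, bound the difference of the two distributions of the linear forms $(\ip{\va,\vx},\ip{\va,\vx'})$ versus $(\ip{\vg,\vx},\ip{\vg,\vx'})$ using a bivariate Berry--Esseen / total-variation comparison, where the total-variation closeness of the coordinates enters and produces an error growing like a power of $t$ times $\eps$; and (iv) optimize over $t$ to balance the two competing powers, which is where the final exponent $1/8$ emerges. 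I would expect the main obstacle to be step (iii): controlling the joint behavior of the \emph{two} correlated linear forms near the sign discontinuity, since anti-concentration of $\ip{\va,\vx}$ must be established uniformly (likely again via the one-dimensional Berry--Esseen bound \eqref{eq:strk inf norm} to compare its small-ball probability to the Gaussian one) while simultaneously keeping the dependence on $\|\vx\|_\infty$ and $\eps$ transparent. Once these error terms are assembled and $t$ is optimized, collecting constants and using $\E a^4\le 16\k^4$ yields the stated bound $C(\E a^4)^{1/8}\eps^{1/8}$, and feeding this $\alpha$ into Lemma~\ref{lem:general structure} gives Theorem~\ref{thm3}.
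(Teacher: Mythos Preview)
Your plan diverges from the paper's in a fundamental way, and the divergence is precisely where the gap appears. The paper does \emph{not} attempt a direct bivariate comparison of $\E\sign(\ip{\va,\vx})\ip{\va,\vx'}$ with its Gaussian counterpart. Instead it introduces the vector $\vv_x := \E\sign(\ip{\va,\vx})\,\va$, so that $\E f_{\vx}(\vx') = \ip{\vv_x,\vx'}$, and reduces the claim to the two \emph{scalar} estimates
\[
|\ip{\vv_x,\vx}-\lambda|\lesssim(\E a^4)^{1/4}\eps^{1/4}
\qquad\text{and}\qquad
\|\vv_x\|_2\le\lambda+C(\E a^4)^{1/4}\eps^{1/4},
\]
exactly as in the proof of Proposition~\ref{expectation2}. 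Both reduce to bounding $\big|\E|\ip{\va,\vz}|-\E|g|\big|$ for some unit vector $\vz$: a one-dimensional problem in which the sign has been absorbed into an absolute value. That problem is handled by Lindeberg replacement for the smooth surrogate $\phi(s)=\sqrt{c+s^2}\approx|s|$; the total-variation hypothesis enters only to compare the second-order Taylor remainders $R(S,a_ix_i)$ and $R(S,g_ix_i)$ via truncation at a level $M$. Optimizing first $M$ and then the smoothing parameter $c$ yields the $(\E a^4)^{1/4}\eps^{1/4}$ scalar error, and the square root in the final geometric combination $|\ip{\vv_x,\vx'}-\lambda\ip{\vx,\vx'}|^2\le\|\vv_x\|_2^2-\lambda^2+2\lambda(\lambda-\ip{\vv_x,\vx})$ produces the exponent $1/8$.

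Your step~(iii) is where your outline fails. When $\vx'\ne\vx$, the integrand $\sign(\ip{\va,\vx})\ip{\va,\vx'}$ genuinely involves two correlated linear forms, and neither the coordinate-wise coupling you describe nor a naive Lindeberg argument yields a dimension-free estimate: the coupling gives an $n\eps$-type error (which you acknowledge and then set aside without replacement), and Lindeberg for a function discontinuous in one argument and unbounded in the other does not terminate at second order with the required summability in $i$. You correctly flag this as ``the main obstacle'' but offer no mechanism to resolve it; the claim that the error ``grows like a power of $t$ times $\eps$'' is asserted, not derived. The paper's $\vv_x$ device is precisely what collapses the bivariate object to a univariate one \emph{before} any smoothing or replacement is attempted, and that reduction is the missing idea in your proposal.
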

We now prove the theorem.

\begin{proof}[Proof of Theorem~\ref{thm3}]
By definition $\E a^4 \leq 16 \kappa^4$.  Thus, we may apply Lemma~\ref{lem:general structure} with 
\[\alpha = C \sqrt{\kappa} \eps^{1/8}\]
to complete the proof.
\end{proof}

To prove the proposition, we proceed with similar steps to the proof of Theorem~\ref{thm2}, starting with the following lemma.

\begin{lemma}\label{lem:variation-inner-product}
$|\ip{\vv_{\vx}, \vx} - \lambda | = \abs{\E \abs{ \< \va, \vx \>} - \sqrt{\frac{2}{\pi}} }\leq 4(\E a^4 + \E g^4)^{1/4}\eps^{1/4}$.
\end{lemma}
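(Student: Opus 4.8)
The plan is to keep the geometric scaffolding already used for Theorem~\ref{thm2}, but to replace the Berry--Esseen step (which yields a bound scaling with $\twonorm{\vx}_3^3$, and hence degrades for extremely sparse signals) by a coupling argument that exploits the hypothesis $\|a-g\|_{TV}\le\eps$. First I would record the identity, exactly as in Lemma~\ref{first2}, that
\[
\ip{\vv_{\vx},\vx}=\E\sign(\ip{\va,\vx})\ip{\va,\vx}=\E\abs{\ip{\va,\vx}},
\]
and note that when $\twonorm{\vx}=1$ the target constant is $\lambda=\sqrt{2/\pi}=\E\abs{g}=\E\abs{\ip{\vg,\vx}}$ for a standard Gaussian vector $\vg$. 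Thus the whole task reduces to comparing $\E\abs{\ip{\va,\vx}}$ with $\E\abs{\ip{\vg,\vx}}$.

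The key device is the maximal coupling. Since $\|a-g\|_{TV}\le\eps$, each coordinate $a_i$ can be realized on a common probability space together with a standard normal $g_i$ so that $P(a_i\ne g_i)\le\eps$; taking these couplings independent across $i$ produces a genuine standard Gaussian vector $\vg$ coupled to $\va$. By the reverse triangle inequality for $\abs{\cdot}$ followed by Jensen's inequality,
\[
\abs{\E\abs{\ip{\va,\vx}}-\E\abs{\ip{\vg,\vx}}}
\le\E\abs{\ip{\va-\vg,\vx}}
\le\parenth{\E\ip{\va-\vg,\vx}^2}^{1/2}.
\]
I would then expand the second moment: because the pairs $(a_i,g_i)$ are independent across $i$ and $\E a_i=\E g_i=0$, every cross term vanishes, leaving $\E\ip{\va-\vg,\vx}^2=\E(a-g)^2\,\twonorm{\vx}^2\le\E(a-g)^2$, where $(a,g)$ denotes a single maximally coupled pair.

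The remaining point is to bound $\E(a-g)^2$ using that $a$ and $g$ differ only on an event of probability at most $\eps$. Writing $\E(a-g)^2=\E[(a-g)^2\one_{a\ne g}]$ and applying Cauchy--Schwarz together with the elementary inequality $(s-t)^4\le 8(s^4+t^4)$ gives
\[
\E(a-g)^2\le\parenth{\E(a-g)^4}^{1/2}\parenth{P(a\ne g)}^{1/2}\le\parenth{8(\E a^4+\E g^4)}^{1/2}\eps^{1/2}.
\]
Feeding this back through the two earlier square roots produces a bound of order $8^{1/4}(\E a^4+\E g^4)^{1/4}\eps^{1/4}$, and since $8^{1/4}=2^{3/4}<4$ this is exactly the claimed estimate. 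The main subtlety — and the reason the exponent is $1/4$ rather than $1$ — is that $a-g$ can be large precisely where the coupling fails, so the rarity of $\{a\ne g\}$ cannot be used directly; the two nested Cauchy--Schwarz steps (first reducing $L^1$ to $L^2$ across the signal, then trading the fourth moment against the probability of disagreement) are what convert the $\eps$ from the coupling into the final $\eps^{1/4}$ rate.
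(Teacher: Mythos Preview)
Your argument is correct and reaches the same bound, but by a genuinely different route from the paper. The paper proceeds by Lindeberg replacement: it swaps the coordinates of $\va$ for those of $\vg$ one at a time, Taylor-expands a smooth surrogate $\phi(t)=\sqrt{c+t^2}$ for $|t|$ to second order, and controls each second-order remainder by truncating the tail integral at a level $M$---using the total-variation hypothesis on $[0,M]$ and a fourth-moment Chebyshev bound beyond $M$. Optimizing first over $M$ and then over the smoothing parameter $c$ is what produces the exponent $\eps^{1/4}$.

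Your coupling argument is more direct and more elementary for this particular lemma: it exploits that $|\cdot|$ is $1$-Lipschitz, so no smoothing is needed, and the two nested Cauchy--Schwarz steps replace the truncate-and-optimize computation. The paper's approach, by contrast, is built to be reusable: essentially the same Lindeberg-plus-truncation machinery is recycled in the proof of Proposition~\ref{expectation4}, where the relevant function $\phi(v)=\theta(x)z$ is bivariate and not globally Lipschitz, so a one-line coupling bound of your type is not available. In short, your proof is the cleaner one for this lemma in isolation; the paper's proof pays a small price in length here in exchange for a template that handles the smooth-$\theta$ case with no new ideas.
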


\begin{proof}
We first prove a variant of the Berry-Esseen result on expectations, applying Lindeberg replacement. Define $S_i = \sum_{j = 1}^{i - 1} a_j x_j + \sum_{j = i + 1}^n g_j x_j$, and $\phi(x)$ to be a bounded twice differentiable function. We will later use an approximation argument to replace $\phi$ by the absolute value function.  

Note by telescoping,
$$ \left|\E \phi(\ip{\va, \vx}) - \E \phi(\ip{\vg, \vx}) \right| = \left|\E\phi\left(\sum_{i = 1}^n a_i x_i\right) - \E\phi\left(\sum_{i = 1}^n g_i x_i\right)\right|$$
$$ \leq \sum_{i = 1}^n |\E\phi(S_i + a_i x_i) - \E\phi(S_i + g_i x_i)|.$$
For convenience, dropping subscripts, we now wish to bound $|\E\phi(S + ax) - \E\phi(S + gx)|$. 

By Taylor's theorem with remainder, we have
$$\phi(S + ax) = \phi(S) + ax\phi'(S) + R(S, ax)$$
where $|R(S, ax)| \leq (ax)^2 \|\phi''\|_\infty /2$. A similar result holds for $\phi(S + gx)$. 

Split $R(S, x)$ into $R_+(S, x) \geq 0$ and $R_-(S, x) \geq 0$. Observe that since $\E a = \E g = 0$, the zeroth and first order terms cancel upon taking expectations in the difference 
$$|\E\phi(S + ax) - \E\phi(S + gx)| = |\E R(S, ax) - \E R(S, gx))|$$
$$\leq |\E R_+(S, ax) - \E R_+(S, gx)| + |\E R_-(S, ax) - \E R_-(S, gx)|.$$

Consider the difference with $R_+$. We will apply the assumption $\|a - g\|_{TV} \leq \eps$. First, observe that $S$ is independent of both $a$ and $g$ and may be viewed as a constant. Viewing for instance $R_+(S, ax)$ as a function of $a$, 
$$\left|\int_0^M P(R_+(S, ax) > t) dt - \int_0^M P(R_+(S, gx) > t) dt\right| \leq M \eps.$$
Then, consider the tail of the first integral:
$$\int_M^\infty P(R_+(S, ax) > t) dt \leq \int_M^\infty \frac{\E(R_+(S, ax)^2)}{t^2} dt$$
$$= \frac{\E R_+(S, ax)^2}{M} \leq \frac{x^4 \E a^4 \|\phi''\|_\infty^2 }{4M}.$$
The Gaussian tail yields a similar error. Hence, optimizing over $M$ by choosing 
$$M = \frac{x^2(\E a^4 + \E g^4)^{1/2} \|\phi''\|_\infty}{2\sqrt{\eps}}$$ 
we have an overall error of
$$|\E R_+(S, ax) - \E R_+(S, gx))| \leq x^2(\E a^4 + \E g^4)^{1/2} \|\phi''\|_\infty\sqrt{\eps}.$$

The same holds for the difference with $R_-$. Finally, summing over the $n$ indices, and using that $\|x\|_2 = 1$,
$$ \left|\E \phi(\ip{\va, \vx}) - \E \phi(\ip{\vg, \vx}) \right| \leq 2(\E a^4 + \E g^4)^{1/2}\|\phi''\|_\infty \sqrt{\eps}.$$

Second, we approximate the absolute value using $\phi(x) := \sqrt{c + x^2} \approx |x|$. Observe for instance that $|\E|\ip{\va, \vx}| - \E\phi(\ip{\va, \vx})| \leq \sqrt{c}$, and likewise with $\vg$ in the place of $\va$. Evaluating $\phi''(x) = c/(c + x^2)^{3/2}$ with a maximum of $1/\sqrt{c}$ at $x = 0$, we may conclude
$$|\ip{\vv_x, \vx} - \lambda | = \bigg |\E|\ip{\va, \vx}| - \E|\ip{\vg, \vx}| \bigg| \leq 2\sqrt{c} + 2(\E a^4 + \E g^4)^{1/2}\sqrt{\frac{\eps}{c}}.$$
Choosing $\sqrt{c} = (\E a^4 + \E g^4)^{1/4}\eps^{1/4}$ completes the proof of the lemma.
\end{proof}

We now proceed to bound $\twonorm{\vv_x}$, thus obtaining the second geometric constraint required in the proof of the proposition.  
We apply Lemma~\ref{lem:variation-inner-product} with $\vz=\vv_x/\|\vv_x\|_2$ in the place of $\vx$:
\begin{align}
\|\vv_x\|_2 &= \langle \vv_{\vx}, \vv_{\vx}/\|\vv_{\vx}\|_2 \rangle = \E \sign(\langle \va, \vx\rangle) \langle \va, \vz \rangle \nonumber \\
&\leq \E \sign(\ip{ \va, \vz}) \ip{ \va, \vz}  = \ip{\vv_z, \vz}  \leq \lambda + 4(\E a^4 + \E g^4)^{1/4}\eps^{1/4}. \label{eq:bound vx}
\end{align}

We conclude the proof with the following calculation.
$$|\langle \vv_{\vx}, \vx' \rangle - \langle \lambda \vx, \vx' \rangle|^2 \leq \|\vv_{\vx}\|_2^2 - \lambda^2 + 2\lambda(\lambda - \langle \vv_{\vx}, \vx \rangle)$$
$$= (\|\vv_{\vx}\|_2 + \lambda)(\|\vv_{\vx}\|_2 - \lambda) + 2\lambda(\lambda - \langle \vv_{\vx}, \vx \rangle)$$
$$\leq 16(\E a^4 + \E g^4)^{1/4}\eps^{1/4}.$$

The last inequality follows from Equation \eqref{eq:bound vx} and Lemmas \ref{lem:variation-inner-product} and \ref{zeroth}.  Proposition~\ref{expectation3} is a consequence of absorbing constants.

\subsection{Total variation: smooth noise model}\label{thm4p}

We consider the setting of Theorem~\ref{thm1}, with the additional assumption that  $\|a - g\|_{TV} \leq \eps$. We also relax the assumption on $\theta(t)$, defined as in (\ref{noise}), to $\theta(t) \in C^2$.

\begin{theorem}[Estimating a signal with noise]\label{thm4}
We remain in the setting of Theorem~\ref{thm1} with the additional condition $\|a - g\|_{TV} \leq \eps$, and also relax the condition on $\theta(t)$ to $\theta(t) \in C^2$. Then for each $\beta > 0$, with probability at least 
$
1 - 4e^{-\beta^2},
$
the solution 
$\widehat{\vx}$ to the optimization problem~(\ref{optimization}) satisfies
\[
\|\vx - \widehat{\vx}\|_2^2 \leq C \left((\kappa^3 + 1)(\tau_1 + \tau_2)\sqrt{\eps} + \frac{\k}{\lambda\sqrt{m}}(w(K) + \beta) \right).
\]
\end{theorem}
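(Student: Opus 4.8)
The plan is to reuse the two-step template behind Theorems~\ref{thm1} and \ref{thm3}: first reduce the recovery guarantee to an expectation estimate through Lemma~\ref{lem:general structure}, and then establish that estimate by a Lindeberg comparison that is now powered by the total-variation hypothesis rather than by $\|\vx\|_\infty$. As recorded in Section~\ref{thm1p}, $\E f_{\vx}(\vx') = \E\theta(\ip{\va,\vx})\ip{\va,\vx'}$, while the Gaussian counterpart equals $\lambda\ip{\vx,\vx'}$ exactly; hence it suffices to prove an Expectation proposition of the form $|\E f_{\vx}(\vx') - \lambda\ip{\vx,\vx'}| \le C(\k^3+1)(\tau_1+\tau_2)\sqrt{\eps}$, valid uniformly over $\vx,\vx'\in B_2^n$. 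Substituting this bound as the parameter $\alpha$ in Lemma~\ref{lem:general structure} then yields Theorem~\ref{thm4}, the concentration term $\tfrac{\k}{\lambda\sqrt{m}}(w(K)+\beta)$ being the one already supplied by Proposition~\ref{concentration} through Lemma~\ref{lem:general structure}.

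For the Expectation proposition I would run the bivariate Lindeberg swap of Lemma~\ref{fourth}, but control each replacement error by total variation as in Lemma~\ref{lem:variation-inner-product} instead of by cancelling a third moment. Note that the geometric device $\vv_x\approx\lambda\vx$ used in Propositions~\ref{expectation2} and \ref{expectation3} relied on the pointwise identity $\sign(t)t=|t|$ and is unavailable for a general smooth $\theta$, so I compare the bilinear quantity head-on. Writing $\phi(v)=\theta(x)z$ for $v=(x,z)$, $v_i=(x_i,z_i)$, and forming the hybrid $S_i=\sum_{j<i}a_jv_j+\sum_{j>i}g_jv_j$, I would telescope the difference into $\sum_i|\E\phi(S_i+a_iv_i)-\E\phi(S_i+g_iv_i)|$. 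Because only $\theta\in C^2$ is available, I would expand to first order with a second-order Lagrange remainder (needing just $\theta',\theta''$): the zeroth-order term $\phi(S_i)$ is common to both sides and the first-order term vanishes in expectation since $\E a_i=\E g_i=0$ and $S_i$ is independent of $a_i,g_i$, leaving only the remainders $R(S_i,a_i)$ and $R(S_i,g_i)$. Using $\partial_{zz}\phi=0$, $\partial_{xz}\phi=\theta'$, $\partial_{xx}\phi=\theta'' z$, and that the intermediate point's second coordinate is at most $|(S_i)_2|+|t\,z_i|$, one gets $|R(S_i,t)|\le\tfrac{t^2}{2}\big(x_i^2\tau_2(|(S_i)_2|+|t|\,|z_i|)+2|x_iz_i|\tau_1\big)$.

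To compare $\E R(S_i,a_i)$ with $\E R(S_i,g_i)$ I would condition on $S_i$ (then treated as a constant, exactly as in Lemma~\ref{lem:variation-inner-product}), split $R$ into its nonnegative and nonpositive parts, bound the truncated integral $\int_0^M$ of the difference of tail probabilities by $M\eps$ via $\|a-g\|_{TV}\le\eps$, and bound the tail $\int_M^\infty$ by $\E R^2/M$ via Markov; optimizing over $M$ gives the $\sqrt{\eps}$ rate with leading constant $\sqrt{\E[R(S_i,a_i)^2\mid S_i]}$. The cubic-in-$t$ piece of $R$ pulls in the sixth moment, and $(\E a^6)^{1/2}+(\E g^6)^{1/2}\lesssim\k^3+1$ is precisely what produces the $(\k^3+1)$ factor, while $\tau_1,\tau_2$ propagate from the Hessian bound. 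I would then average over $S_i$ using $\E(S_i)_2^2\le1$ and sum over $i$, closing the coefficient sums by $\sum_i x_i^2=1$, $\sum_i|x_iz_i|\le1$, and $\sum_i x_i^2|z_i|\le\|\vz\|_\infty\le1$, all of which are absolute constants.

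The hard part will be this final summation in tandem with the remainder's dependence on the second hybrid coordinate $(S_i)_2$. In the univariate Lemma~\ref{lem:variation-inner-product} the remainder was bounded uniformly in $S$ by $(ax)^2\|\phi''\|_\infty/2$, so the truncation constant was $S$-free and each coordinate contributed a clean $x_i^2\sqrt{\eps}$; here the presence of $|(S_i)_2|$ forces the truncation level $M$ to be chosen conditionally on $S_i$ and only then averaged, and one must check that every coefficient sum closes to an absolute constant \emph{without} silently reintroducing $\|\vx\|_\infty$ --- this is exactly what separates the total-variation regime, which needs no spreadness of $\vx$, from the regime of Theorem~\ref{thm1}. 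The remaining care is to land the moment degree on $\k^3$ rather than a higher power and to absorb the Gaussian remainder, as was done in Lemma~\ref{fourth}.
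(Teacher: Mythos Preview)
Your proposal is correct and follows essentially the same route as the paper: reduce via Lemma~\ref{lem:general structure} to an expectation bound, prove that bound by the bivariate Lindeberg swap of Lemma~\ref{fourth} expanded only to second order, and handle each replacement error by the truncation-plus-Markov argument of Lemma~\ref{lem:variation-inner-product} powered by $\|a-g\|_{TV}\le\eps$, with the sixth moment producing the $\k^3+1$ factor. The only cosmetic differences are that the paper bounds the remainder more crudely by $\tfrac{1}{2}a_i^2(|x_i|+|z_i|)^2(\tau_1+\tau_2(|(S_i)_2|+|z_ia_i|))$ so that the final sum closes simply via $\sum_i(|x_i|+|z_i|)^2\le 4$, and it chooses a single deterministic $M$ (averaging over $S_i$ before optimizing) rather than your conditional choice; both orderings give the same $\sqrt{\eps}$ rate.
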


We bound $\alpha$ in Lemma~\ref{lem:general structure}.

\begin{proposition}[Expectation]\label{expectation4}
For $\vx \in S^{n-1}, \vx' \in B_2^n$,
$$|\E f_{\vx}(\vx') - \ip{ \lambda \vx, \vx' }|\leq 8(\E a^6 + \E g^6)^{1/2} (\tau_1 + \tau_2)\sqrt{\eps}.$$
\end{proposition}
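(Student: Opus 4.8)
The plan is to reduce the claim to a bivariate Lindeberg replacement and then convert each per-coordinate replacement error into a total-variation comparison, exactly in the spirit of Lemmas~\ref{fourth} and \ref{lem:variation-inner-product}. By \eqref{expectation fx} we have $\E f_{\vx}(\vx') = \E\theta(\ip{\va,\vx})\ip{\va,\vx'}$, and since $\|\vx\|_2 = 1$ the Gaussian analogue equals $\lambda\ip{\vx,\vx'}$ (as recorded just after \eqref{expectation fx}). Thus it suffices to bound $|\E\theta(\ip{\va,\vx})\ip{\va,\vx'} - \E\theta(\ip{\vg,\vx})\ip{\vg,\vx'}|$, i.e.\ the difference of expectations of the bivariate function $\phi(v) := \theta(x)z$, at $v=(x,z)$, evaluated at $\sum_j a_j v_j$ versus $\sum_j g_j v_j$, where $v_j := (x_j, x'_j)$.

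First I would set up the telescoping sum as in the proof of Lemma~\ref{fourth}: with $S_i := \sum_{j<i} a_j v_j + \sum_{j>i} g_j v_j$ independent of the $i$-th coordinate, one has $|\E\phi(\sum_j a_j v_j) - \E\phi(\sum_j g_j v_j)| \leq \sum_i |\E\phi(S_i + a_i v_i) - \E\phi(S_i + g_i v_i)|$. Unlike Lemma~\ref{fourth}, where $\theta$ was $C^3$ and the expansion was carried to third order, here $\theta\in C^2$ only, so I would Taylor expand each summand to \emph{first} order and retain the quadratic term as a remainder $R(S_i, a_i v_i)$. Because $\E a_i = \E g_i = 0$ and $S_i$ is independent of the $i$-th coordinate, the constant and linear terms cancel in the difference, leaving $|\E R(S_i, a_i v_i) - \E R(S_i, g_i v_i)|$ to be controlled.

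The crucial observation is that, with $S_i$ and $v_i$ frozen, both $R(S_i, a_i v_i)$ and $R(S_i, g_i v_i)$ are functions of the \emph{single} scalars $a_i$ and $g_i$ respectively, so the total-variation machinery in the proof of Lemma~\ref{lem:variation-inner-product} applies verbatim: split $R$ into its nonnegative parts, compare the layer-cake integrals on $[0,M]$ at cost $M\eps$ using $\|a-g\|_{TV}\leq\eps$, bound the tails beyond $M$ by $\E R_+^2/M$ via Chebyshev, and optimize over $M$ to obtain a per-coordinate error of order $(\E R^2)^{1/2}\sqrt{\eps}$. The Gaussian remainder is handled identically and absorbed.

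The main obstacle, and the step I would treat most carefully, is the second-moment computation of the bivariate remainder. Since $\partial_{xx}\phi = \theta''(x)z$ and $\partial_{xz}\phi = \theta'(x)$, the remainder has the form $R = \tfrac12 a_i^2[\theta''(\xi_1)\xi_2 x_i^2 + 2\theta'(\xi_1)x_i x'_i]$ for an intermediate point $\xi$ whose $z$-component satisfies $|\xi_2| \leq |(S_i)_2| + |a_i x'_i|$. Squaring and taking expectations, the term $\theta''(\xi_1)\xi_2 a_i^2 x_i^2$ contributes both $\E a_i^4\,\E (S_i)_2^2$ and $\E a_i^6\,(x'_i)^2$; using $\E(S_i)_2^2\leq \|\vx'\|_2^2 \leq 1$ (variances add) and $\E a_i^4 \leq \E a_i^6$, this gives $(\E R^2)^{1/2}\lesssim (\tau_1+\tau_2)(\E a^6)^{1/2}(x_i^2 + |x_i x'_i|)$ — precisely where the sixth moment $\E a^6$ (and, for the Gaussian remainder, $\E g^6$) enters. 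Summing the per-coordinate bounds and using $\sum_i(x_i^2 + |x_i x'_i|) \leq \|\vx\|_2^2 + \|\vx\|_2\|\vx'\|_2 \leq 2$ then collects everything into $8(\E a^6 + \E g^6)^{1/2}(\tau_1+\tau_2)\sqrt{\eps}$, as claimed.
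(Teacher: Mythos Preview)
Your proposal is correct and follows essentially the same argument as the paper's proof: Lindeberg replacement on the bivariate function $\phi(x,z)=\theta(x)z$, Taylor expansion to first order with a second-order remainder (since $\theta\in C^2$ only), the total-variation splitting of the remainder into $R_\pm$ and the $M\eps + \E R_\pm^2/M$ optimization from Lemma~\ref{lem:variation-inner-product}, and the sixth-moment bound on $\E R^2$ via $|(S_i')_2|\le |(S_i)_2|+|a_i x'_i|$. The only cosmetic difference is that the paper bounds the per-coordinate contribution by $(|x_i|+|x'_i|)^2$ before summing, whereas you use the slightly sharper $x_i^2+|x_i x'_i|$; both sum to at most a constant since $\vx\in S^{n-1}$ and $\vx'\in B_2^n$.
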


We now prove the theorem.

\begin{proof}[Proof of Theorem~\ref{thm4}]
By definition of $\kappa$ we have $\E a^6 \leq 216 \kappa^6$ which implies $\sqrt{\E a^6 + \E g^ 6 } \leq C(\kappa^3 + 1)$.  Thus, we apply Lemma~\ref{lem:general structure} with 
\[\alpha = C(\kappa^3 + 1)(\tau_1 + \tau_2)\sqrt{\eps}\]
to complete the proof of the theorem.
\end{proof}

We now prove the proposition.

\begin{proof}[Proof of Proposition~\ref{expectation4}]
Recalling the steps in Section \ref{thm1p}, observe that the left hand side of the inequality is expressible as
$$|\E f_{\vx}(\vx') - \ip{ \lambda \vx, \vx' }| = |\E \theta(\ip{\va, \vx}) \ip{\va, \vx'} - \E \theta(\ip{\vg, \vx}) \ip{\vg, \vx'} |.$$

The statement of the proposition becomes similar to that of Lemma~\ref{fourth}. Using the same notation and proceeding as in its proof (including the use of $\vz$ in place of $\vx'$), we apply Lindeberg replacement:
$$|\E \theta(\ip{ \va, \vx })\ip{ \va, \vz } - \E \theta(\ip{ \vg, \vx })\ip{ \vg, \vz }| \leq \sum_{i = 1}^n |\E \phi(S_i + a_i v_i) - \E \phi(S_i + a_i v_i)|.$$
As before, we Taylor expand, except only to second order error: 
\[
\phi(S_i + a_i v_i) = \phi(S_i) + \sum_{|\alpha| = 1} (a_iv_i)^\alpha \partial^\alpha \phi(S_i) + R(S_i, a_i v_i)
\]
where $R(S_i, a_i v_i) = \frac{1}{2} \sum_{|\alpha| = 2} (a_iv_i)^\alpha 
\partial^\alpha \phi(S_i') $ for some $S_i'$ on the line segment joining $S_i$ and $S_i + a_iv_i$. A similar result holds with 
$\phi(S_i + g_iv_i)$, with respective $S_i''$. 

Split $R(S, v)$ into $R_+(S, v) \geq 0$ and $R_-(S, v) \geq 0$. Observe that since $\E a = \E g = 0$, the zeroth and first order terms cancel upon taking expectations in the difference 
$$|\E\phi(S_i + a_iv_i) - \E\phi(S_i + g_iv_i)| = |\E R(S_i, a_i v_i) - \E R(S_i, g_i v_i))|$$
$$\leq |\E R_+(S_i, a_i v_i) - \E R_+(S_i,  g_i v_i)| + |\E R_-(S_i,  a_i v_i) - \E R_-(S_i, g_i v_i)|.$$

Consider the difference containing $R_+$. We will apply the assumption $\|a - g\|_{TV} \leq \eps$. First, observe that $S_i$ is independent of both $a_i$ and $g_i$ and may be viewed as a constant (by conditioning on it). Viewing for instance $R_+(S_i, a_i v_i)$ as a function of $a_i$,
$$\left|\int_0^M P(R_+(S_i, a_i v_i) > t) dt - \int_0^M P(R_+(S_i, g_i v_i) > t) dt\right| \leq M \eps.$$
Then, consider the tail of the first integral:
$$\int_M^\infty P(R_+(S_i, a_i v_i) > t) dt \leq \int_M^\infty \frac{\E(R_+(S_i, a_i v_i)^2)}{t^2} dt = \frac{\E R_+(S_i, a_i v_i)^2}{M}.$$

Recall the explicit form of the remainder and observe that the partials in the error vanish except 
when at most one partial is taken on the second argument of $\phi$, yielding either $\theta'(x)$ 
or $\theta''(x)z$. Furthermore, note that since $S_i' $ is on the line segment joining $S_i$ and $S_i + a_iv_i$, we may apply the bound $|(S_i')_2| \leq |(S_i)_2| + |a_i z_i|$ to conclude
$$\E 4R_+(S_i, a_i v_i)^2 \leq \E 4R(S_i, a_i v_i)^2  \leq \E \left( \sum_{|\alpha| = 2} a_i^2 |v_i^\alpha| (\|\theta'\|_\infty + \|\theta''\|_\infty (|(S_i)_2| + |a_iz_i|)) \right)^2$$
$$= \E \left(a_i^2 (|x_i| + |z_i|)^2 (\tau_1 + \tau_2 (|(S_i)_2| + |z_ia_i|) )\right)^2$$

Observe that $(S_i)_2$ and $a_i$ are independent, and $(\E |(S_i)_2|)^2 \leq \E (S_i)_2^2 \leq 1$ by Cauchy-Schwarz and that the variance of an independent sum is a sum of variances. Further observing that $|z_i| \leq 1$ and for instance $\E |a|^5\leq \E a^6$, rearranging and collecting terms yields
$$\E 4R_+(S_i, a_i v_i)^2 \leq (|x_i| + |z_i|)^4  \left( 4\tau_2^2 \E a_i^6 + \tau_1^2 \E a_i^4+ 4\tau_1\tau_2\E |a_i|^5 \right) $$
$$\leq 4(|x_i| + |z_i|)^4  (\tau_1 + \tau_2)^2 \E a_i^6.$$
The Gaussian tail yields a similar error. Hence, optimizing over $M$ by choosing 
$$M = \frac{1}{\sqrt{\eps}}(|x_i| + |z_i|)^2(\E a^6 + \E g^6)^{1/2} (\tau_1 + \tau_2)$$ 
we have overall error
$$|\E R_+(S_i, a_i v_i) - \E R_+(S_i, g_i v_i)| \leq 2(|x_i| + |z_i|)^2(\E a^6 + \E g^6)^{1/2} (\tau_1 + \tau_2)\sqrt{\eps}.$$

The same holds for the difference with $R_-$. Finally, summing over the $n$ indices, and using that $\|\vx\|_2 = 1$ and $\|\vz\|_2 = 1$,
$$ \left|\E \phi(\ip{\va, \vx}) - \E \phi(\ip{\vg, \vx}) \right| \leq 8(\E a^6 + \E g^6)^{1/2} (\tau_1 + \tau_2)\sqrt{\eps},$$
which concludes the proof of the proposition.
\end{proof}

\bibliographystyle{acm}
\bibliography{pv-1bitcs-robust}

%
%
%
%
%
%
%
%
%

\end{document}